\documentclass[lettersize,journal]{IEEEtran}
\usepackage{cite}
\usepackage{amsmath,amssymb,amsthm}
\usepackage{algorithmic}
\usepackage{array}
\usepackage[caption=false,font=normalsize,labelfont=sf,textfont=sf]{subfig}
\usepackage{textcomp}
\usepackage{stfloats}
\usepackage{url}
\usepackage{verbatim}
\usepackage{graphicx}
\usepackage{color}
\usepackage{xcolor}
\usepackage{multirow}
\usepackage{colortbl}
\usepackage{diagbox}
\usepackage{booktabs}
\usepackage{makecell}
\usepackage[hidelinks]{hyperref}
\usepackage{orcidlink}
\usepackage[T1]{fontenc}
\usepackage{mathtools} 
\newtheorem{theorem}{Theorem}
\newtheorem{definition}{Definition}

\newtheorem{type}{Type}
\begin{document}

\title{A Robust Learning Framework for Deep-Learning-Based Radar Target Detection With Partially Mislabeled Training Data}

\author{Chuanfei Zang\orcidlink{0009-0005-8851-5926}, 
	Yiru Lin\orcidlink{0009-0004-6850-5344},
	Yumiao Wang\orcidlink{0000-0002-1474-0486},
	Xingyu Chen\orcidlink{0009-0001-4295-9928},
	Xiaobo Yang\orcidlink{0009-0000-4888-4050},\\
	and Guolong Cui\orcidlink{0000-0001-5707-6311},~\IEEEmembership{Senior Member,~IEEE,}
	\thanks{This work was supported by the National Natural Science Foundation of China under Grants 62271126 and U24B20188, and by the Sichuan Province Innovative Talent Funding Project for Postdoctoral Fellows. (\textit{Corresponding author: Yumiao Wang}.)}

	\thanks{Chuanfei Zang, Yiru Lin, Yumiao Wang, Xingyu Chen, and Xiaobo Yang are with the School of Information and Communication Engineering, University of Electronic Science and Technology of China, Chengdu 611731, China (e-mail: zangcf2017@163.com; linyiru2021@163.com; wangyumiao@uestc.edu.cn; cxytoby@163.com; xiaoboyang1964@gmail.com). }
	
	\thanks{Guolong Cui is with the School of Information and Communication Engineering, University of Electronic Science and Technology of China, Chengdu 611731, China, and also Tianfu Jiangxi Laboratory, Chengdu, Sichuan 641419, China (e-mail: cuiguolong@uestc.edu.cn). }}

\markboth{Journal of \LaTeX\ Class Files,~Vol.~14, No.~8, August~2021}%
{Shell \MakeLowercase{\textit{et al.}}: A Sample Article Using IEEEtran.cls for IEEE Journals}


\maketitle

\begin{abstract}
	Deep-learning-based radar target detection has substantially improved detection performance, but its effectiveness depends critically on the correctness of the training labels assigned to radar echo samples. In practical applications, training data may contain partially mislabeled samples, which can cause detection models to learn biased supervisory information and thereby degrade detection performance. To address this issue, this paper proposes SL-RLF, a robust learning framework for deep-learning-based radar target detection with partially mislabeled training data. Specifically, we establish a probabilistic model to characterize the generation mechanisms of label errors in radar target detection and analyze the robustness of detection models under partial label errors within a risk minimization formulation. Theoretical analysis shows that, under appropriate assumptions, a loss function satisfying the symmetry condition enables the detection model in the learning framework to be robust against partial label errors. Guided by this result, we design a loss function satisfying the symmetry condition and construct the proposed robust learning framework, SL-RLF. Theoretically, under the corresponding label-error conditions, the optimal detection model learned by SL-RLF from partially mislabeled training data can achieve the same detection performance under the true data distribution as the optimal model learned from accurately labeled data. Experimental results demonstrate that, under different types and levels of partial label errors, the proposed SL-RLF consistently outperforms representative methods, including Co-teaching and TCE, and its robustness is generally consistent with the theoretical analysis.
\end{abstract}

\begin{IEEEkeywords}
Deep learning, radar target detection, mislabeled training data, robust learning framework.
\end{IEEEkeywords}

\section{Introduction}
Radar target detection is a fundamental task in radar signal processing \cite{10048578,9714856,9854805}. By transmitting electromagnetic waves and receiving target-scattered echoes, radar systems enable long-range target detection and can operate under all-weather and all-time conditions \cite{10700971,9420308}. Consequently, they play an important role in both military and civilian applications\cite{10716783,11124398}. However, in complex environments, target echoes are often severely masked by strong clutter, which substantially degrades target detectability and poses significant challenges to reliable radar target detection \cite{10975280,9457195}.

Conventional radar target detection methods typically rely on statistical clutter modeling. When the assumed clutter distribution matches the observed data, these methods can achieve satisfactory detection performance; however, variations in clutter statistics may cause model mismatch and performance degradation \cite{10929070,9733186}. To reduce reliance on prior distributional assumptions, methods exploiting discriminative features between targets and clutter have been developed \cite{8557319,1573749,5395653}. By extracting amplitude, polarimetric, time-frequency, or spatial-correlation features, these methods improve target--clutter separability in complex backgrounds \cite{9380220,1593337}. Nevertheless, their performance remains limited by feature design, parameter selection, and scene adaptability \cite{10587212,1015011}.

Deep-learning-based radar target detection methods are typically trained within a supervised learning framework. These methods can adaptively learn discriminative representations between targets and clutter from labeled echo data, thereby reducing the reliance on explicit mathematical modeling and feature design \cite{10371351,10877895,10485649}. Within this supervised learning framework, recent studies have exploited the advantages of deep learning models in high-dimensional feature extraction, complex nonlinear mapping, and spatiotemporal correlation mining, leading to improved detection performance in complex scenarios \cite{10848402,10582311,11204961,9672704,9018199}.

Although deep learning has substantially improved radar target detection performance, effective training of detection models within the supervised learning framework relies on the accuracy of the labels assigned to radar echo data \cite{9335266,9210028}. Unlike optical images and natural language data, radar echoes lack intuitive semantic representations and usually require sophisticated signal processing procedures to improve the signal-to-clutter ratio (SCR) and highlight target characteristics \cite{10032174,10963724}. Therefore, radar data annotation depends strongly on professional expertise and domain knowledge. It is not only costly, but also difficult to guarantee complete label correctness at all times \cite{10707336,10883047}. In addition, subjective bias from annotators, interference from complex environments, and sensor anomalies may all introduce label errors. In particular, when targets are submerged in strong clutter or observed under low SCR conditions, even experienced experts may find it difficult to provide consistently stable and reliable labels \cite{9585088}. 

Therefore, practical radar datasets inevitably contain a certain proportion of mislabeled samples. In a supervised learning framework, detection models rely on the supervisory information provided by training labels. When a subset of labels is erroneous, the strong function approximation and data memorization capabilities of deep neural networks may cause the detection model to fit, memorize, and reinforce erroneous supervisory signals during training \cite{pmlr-v119-ma20c}. Consequently, under the standard supervised learning framework, the decision relationships learned by the detection model may deviate from the true data distribution, leading to substantial degradation in detection performance \cite{liu2020early,ghosh2017robust}. These issues indicate the need for a robust learning framework that enables deep-learning-based radar target detection models to learn reliably from partially mislabeled training data while suppressing the performance degradation caused by label errors \cite{NEURIPS2021_fe2d0103}.

Arpit \textit{et al.} \cite{pmlr-v70-arpit17a} found that deep-learning-based models trained on partially mislabeled data under a supervised learning framework often exhibit ``early learning,'' in which the model first fits samples with correct labels and subsequently memorizes mislabeled samples. Motivated by this phenomenon, Liu \textit{et al.} \cite{liu2020early} improved the learning framework by introducing a regularization term in the later stage of training to constrain model outputs to remain consistent with early-stage predictions, thereby mitigating overfitting to mislabeled samples. However, the effectiveness of such early-learning-based robust training methods depends heavily on the reliability of the detection results obtained in the initial training stage. In low-SCR scenarios, the model may fail to form a reliable decision boundary at this stage, and its early detection results can therefore be substantially affected by label errors. Moreover, the weight of the introduced regularization term is sensitive to data characteristics. If improperly specified, it may cause the model to rely excessively on initial predictions or fail to sufficiently suppress the memorization of mislabeled samples in later training.

Also motivated by early learning, Wang \textit{et al.} \cite{10675372} proposed a pseudo-labeling (PL) learning framework. This framework first performs preliminary training on partially mislabeled data to obtain early-stage model representations. It then formulates the identification of mislabeled samples as an anomaly detection problem for data cleaning. Finally, the model is retrained with a pseudo-label assignment strategy to reduce the impact of label errors on detection performance. However, this framework mainly addresses the case in which targets are mislabeled as clutter, while giving insufficient consideration to the also practical case in which clutter is mislabeled as targets. In addition, the multiple training stages of the PL framework are mutually dependent, and each stage introduces additional hyperparameters, which significantly increases training cost and tuning complexity.

In contrast to the aforementioned approaches, which improve the robustness of the learning framework by introducing multiple training stages, Han \textit{et al.} \cite{han2018co} proposed the Co-teaching learning framework. This framework simultaneously trains two detection models with identical architectures but different parameter initializations, enabling them to produce slightly different outputs for the same samples and thereby provide discriminative information from distinct perspectives. Based on this mechanism, the framework filters the training data according to the losses and prediction consistencies of both models, retaining samples that are more likely to be correctly labeled to mitigate the interference of mislabeled samples. Although this framework avoids complicated multistage training procedures, it requires training an additional model of the same scale, thereby incurring relatively high computational costs. Furthermore, as training progresses, the representations learned by the two models may gradually converge, reducing the complementarity between the two views and weakening the effectiveness of sample selection in suppressing label errors.

In addition to modifying training strategies, loss function design provides an effective way to construct robust learning frameworks. Supervised training pipelines commonly adopt cross-entropy (CE) as the loss function; however, Lei \textit{et al.} \cite{10.5555/3491440.3491745} showed that CE may cause detection models to overfit mislabeled samples during training. They therefore incorporated the Taylor cross-entropy (TCE) loss into their learning framework. TCE performs a Taylor expansion of the CE loss and truncates higher-order terms, thereby controlling the extent to which the model fits mislabeled samples. However, the performance of this framework depends critically on the truncation order, which typically must be manually adjusted according to the severity of label errors. This dependence limits the generality of the framework across different radar target detection scenarios.

Although existing methods can mitigate the adverse effects of label errors on model training to some extent, they often involve complex training procedures, additional hyperparameters, or limited theoretical interpretability. This paper develops a theoretically grounded robust learning framework that does not introduce additional training stages, sample-selection strategies, or hyperparameters. Specifically, we first model the generation mechanisms of label errors in radar target detection and analyze, within a Bayes risk minimization formulation, the conditions under which loss functions in a supervised learning framework remain robust to partial label errors. Guided by the derived symmetry condition, we design a symmetric loss to replace the CE loss in the standard supervised learning framework, thereby obtaining the proposed symmetric-loss-based robust learning framework (SL-RLF). The main contributions of this paper are summarized as follows:
\begin{enumerate}
	\item We probabilistically model the generation process of label errors in radar target detection, establishing a theoretical basis for analyzing the robustness of detection models trained with partially mislabeled data.
	\item Within a Bayes risk minimization formulation, we prove that loss functions satisfying the symmetry condition are robust under the considered partial-label-error conditions.
	\item We design a loss function satisfying the symmetry condition and use it to construct SL-RLF for deep-learning-based radar target detection models. Under appropriate conditions, the optimal detection model learned within SL-RLF from partially mislabeled training data can achieve the same detection performance as that learned from accurately labeled data.  
	\item We conduct experiments on measured radar echo data to validate SL-RLF. The results show that, across different types and levels of label errors, SL-RLF consistently outperforms representative robust-learning methods, including Co-teaching and TCE, and the observed performance trends are highly consistent with the theoretical analysis.
\end{enumerate}

The remainder of this paper is organized as follows. Section \ref{section:Signal_Model} presents the signal model and problem formulation. Section \ref{section:Methodology} presents the theoretical analysis and related theorems. Section \ref{section:Experiments} provides experimental evaluation and comparative analysis. Finally, Section \ref{section:Conclusion} concludes the paper.

\section{Signal Model and Learning Framework} \label{section:Signal_Model}
During a coherent processing interval (CPI), the radar transmits $P$ coherent pulses and receives the corresponding echoes through the I/Q channels. After quadrature demodulation, matched filtering, and range extraction, a discrete echo matrix $\mathbf{X} \in \mathbb{C}^{P \times R}$ is obtained, where $R$ denotes the number of range cells, and $\mathbf{x}_r \in \mathbb{C}^{P \times 1}$ denotes the slow-time echo vector corresponding to the $r$-th range cell. For an arbitrary cell under test (CUT), the received echo can be described by the following binary hypothesis testing model:
\begin{equation}
	\begin{cases}
		H_0:\mathbf{x}_r=\mathbf{c}_r+\mathbf{n}_r,\\
		H_1:\mathbf{x}_r=\mathbf{s}_r+\mathbf{c}_r+\mathbf{n}_r,
	\end{cases}
	\label{eq:target}
\end{equation}
where $\mathbf{s}_r$, $\mathbf{c}_r$, and $\mathbf{n}_r$ denote the target-scattered signal, clutter signal, and additive noise, respectively. The hypotheses $H_1$ and $H_0$ correspond to target-present and target-absent CUTs, respectively; these two types of CUTs are hereafter referred to as target cells and clutter cells.

Based on the hypothesis testing model, conventional detection methods aim to make a binary decision between $H_0$ and $H_1$ given $\mathbf{x}_r$, without further identifying the specific class of the target contained in a target cell. In contrast, deep-learning-based detection methods for radar targets can formulate radar target detection as a multiclass classification problem \cite{9999169}. Specifically, clutter cells and target cells containing different target classes are treated as mutually exclusive classes, and the model outputs the predicted probability that a sample $\mathbf{x}_r$ belongs to each class. The detection decision is then made according to these probabilities. For notational simplicity, the range-cell index $r$ is omitted hereafter, and $\mathbf{x}$ denotes the echo signal of the current CUT.

To get the network parameters of the detection model, the model is trained within a supervised learning framework using training data with correct labels. Let the training dataset be denoted by $\mathcal{D} = \{(\mathbf{x}_i, y_{\mathbf{x}_i})\}_{i=1}^N$, sampled from an unknown scene distribution over $\mathcal{P}=\mathcal{X} \times \mathcal{Y}$, where $i$ is the sample index, $N$ is the number of samples, and $y_{\mathbf{x}_i}$ is the label of sample $\mathbf{x}_i$. Here, $\mathcal{X} \subseteq \mathbb{C}^{P \times 1}$ denotes the complex-valued signal space, and $\mathcal{Y} = \{1, \dots, K\}$ denotes the label space, where $K$ is the total number of classes.

Within this framework, the detection model $f_{\boldsymbol{\theta}}$ maps the input signal to the $K$-dimensional probability simplex $\Delta^{K-1}$:
\begin{equation}
	f_{\boldsymbol{\theta}}\!:\!\mathcal{X}\!\to\!\Delta^{K\!-\!1}\!,\
	\Delta^{K\!-\!1}\!\!=\!\left\{\mathbf{p}\!\in\!\mathbb{R}^{K}\middle| \mathbf{p}_k\ge 0, \sum_{k=1}^{K}\mathbf{p}_k \!=\!1\right\}.
	\label{eq:model_output}
\end{equation}
Thus, for any input sample $\mathbf{x}_i$, the model output $f_{\boldsymbol{\theta}}(\mathbf{x}_i)$ is a $K$-dimensional vector whose entries are nonnegative and sum to one:
\begin{equation}
	\sum_{k=1}^{K}[f_{\boldsymbol{\theta}}(\mathbf{x}_i)]_k=1, \quad[f_{\boldsymbol{\theta}}(\mathbf{x}_i)]_k \ge 0,
\end{equation}
where  $[f_{\boldsymbol{\theta}}(\mathbf{x}_i)]_k$ denotes the $k$-th component of $f_{\boldsymbol{\theta}}(\mathbf{x}_i)$ , representing the predicted probability that the input sample $\mathbf{x}_i$ belongs to the $k$-th class. Based on this output, the decision function $\mathrm{dec}\left( \cdot\right) $ in this framework adopts the maximum a posteriori criterion to obtain the final detection result:
\begin{equation}
	\mathrm{dec}\!\left(f_{\boldsymbol{\theta}}(\mathbf{x}_i)\right)=\arg\max_{k}[f_{\boldsymbol{\theta}}(\mathbf{x}_i)]_k.
\end{equation}

The corresponding loss function $\ell: \Delta^{k-1} \times \mathcal{Y} \to \mathbb{R}_+$ measures the consistency between the model output $f_{\boldsymbol{\theta}} (\mathbf{x}_i)$ and the label $y_{\mathbf{x}_i}$ represented in one-hot form. Given $\ell$, the population risk of the model $f_{\boldsymbol{\theta}}$ is defined as
\begin{equation}
	\begin{aligned}
		R_\ell(f_{\boldsymbol{\theta}}) = \mathbb{E}_{\mathbf{x}_i, y_{\mathbf{x}_i}}[\ell\left( f_{\boldsymbol{\theta}}(\mathbf{x}_i), \mathbf{e}_{y_{\mathbf{x}_i}}\right) ],
	\end{aligned}
	\label{eq:R_L}
\end{equation}
where $\mathbf{e}_i$ denotes the unit vector whose $i$-th component is one and all other components are zero. During model training, the parameters of $f_{\boldsymbol{\theta}}$ are updated by minimizing $R_\ell(f_{\boldsymbol{\theta}})$ \cite{10.5555/944919.944944}, and the optimal solution is given by
\begin{equation}
	f_{{\boldsymbol{\theta}}^\ast} = \arg\min_{\boldsymbol{\theta}}  {R}_\ell(f_{\boldsymbol{\theta}}).
	\label{eq:Correct_data_Optimal}
\end{equation}
Within this learning framework, the mean squared error (MSE) loss and CE loss have been widely used to train  $f_{\boldsymbol{\theta}}$ because of their stable optimization behavior and favorable convergence performance \cite{1227801}.

However, this supervised learning framework relies on the correctness of the labels in the training dataset $\mathcal{D}$ to learn a reliable detection model. This requirement is difficult to satisfy rigorously in practice, because the labeling process may be affected by human judgment bias, complex environmental interference, sensor anomalies, and other factors, causing the observed labels of some samples to deviate from their true classes. Consequently, real radar datasets may contain a certain proportion of mislabeled samples. In this case, if the detection model is still trained under this learning framework, the strong fitting and memorization capabilities of deep neural networks may cause the model to overlearn erroneous supervisory information from mislabeled samples. As a result, the model may learn decision relationships that deviate from the true data distribution, ultimately degrading detection performance \cite{karimi2020deep}. Therefore, developing a robust learning framework that trains detection models on partially mislabeled training data while suppressing the adverse impact of label errors on detection performance is the central problem addressed in this paper.

\section{Methodology} \label{section:Methodology}
This section first establishes a probabilistic model to characterize the generation mechanisms of label errors. Based on this model, the robustness of loss functions satisfying the symmetry condition is analyzed under different label-error generation mechanisms within a risk minimization formulation. Guided by the symmetry condition, a symmetric loss function is then designed, and a robust learning framework with theoretical robustness guarantees is obtained for training deep-learning-based radar target detection models on partially mislabeled training data by replacing the CE loss in the standard supervised learning framework with the proposed symmetric loss.

\begin{figure}[t]
	\centering
	\includegraphics[width=0.49\textwidth, height=0.21\textwidth]{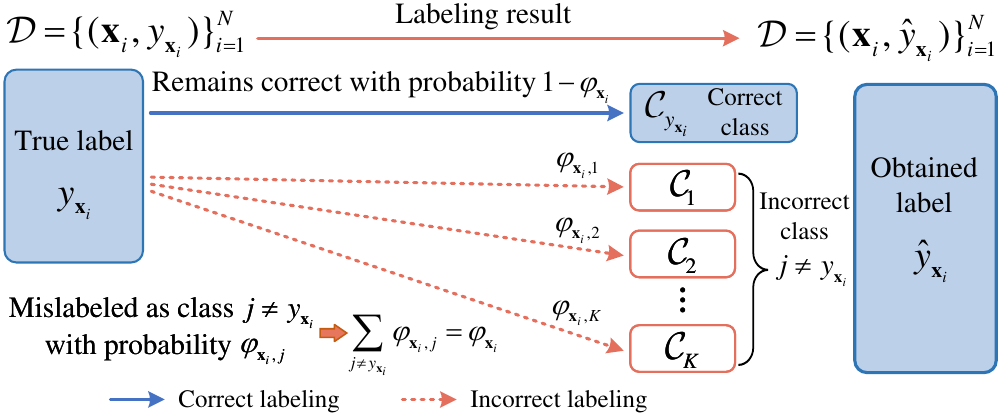}
	\caption{Schematic diagram of the incorrect label generation process.}
	\label{fig:gen_error_label}
\end{figure}

\subsection{Label-Error Modeling}\label{Error type modeling}
As illustrated in Fig.~\ref{fig:gen_error_label}, let the partially mislabeled training set be denoted by $\hat{\mathcal{D}} = \{(\mathbf{x}_i, \hat{y}_{\mathbf{x}_i})\}_{i=1}^N$, and let the corresponding accurately labeled dataset be denoted by $\mathcal{D} = \{(\mathbf{x}_i, y_{\mathbf{x}_i})\}_{i=1}^N$. Here, $y_{\mathbf{x}_i}$ denotes the true label of $\mathbf{x}_i$, whereas $\hat{y}_{\mathbf{x}_i}$ denotes the observed label obtained through the labeling process, which may differ from $y_{\mathbf{x}_i}$. The label-error generation process is modeled as
\begin{equation}
	\hat{y}_{\mathbf{x}_i} =
	\begin{cases}
		y_{\mathbf{x}_i} & {\mathrm{~w.p.~}} 1 - \varphi_{\mathbf{x}_i} ;\\
		j \in \{1,...,K\} \setminus \{y_{\mathbf{x}_i}\} & {\mathrm{~w.p.~}} {\varphi}_{\mathbf{x}_i, j},
	\end{cases}
	\label{eq:get_label}
\end{equation}
where $\varphi_{\mathbf{x}_i}$ denotes the total probability that sample $\mathbf{x}_i$ is mislabeled, referred to as the label-error probability. Moreover, ${\varphi}_{\mathbf{x}_i, j}$ denotes the transition probability that sample $\mathbf{x}_i$ is mislabeled as class $j$ instead of its true label $y_{\mathbf{x}_i}$, satisfying
\begin{equation}
	\sum_{j \neq y_{\mathbf{x}_i}} {\varphi}_{\mathbf{x}_i, j} = \varphi_{\mathbf{x}_i}.
\end{equation}

Equation~\eqref{eq:get_label} characterizes the generation mechanism from the true label $y_{\mathbf{x}_i}$ to the actually observed label $\hat{y}_{\mathbf{x}_i}$. According to the dependence of $\varphi_{\mathbf{x}_i}$ and ${\varphi}_{ \mathbf{x}_i, j}$ on the sample $\mathbf{x}_i$ and its true class $y_{\mathbf{x}_i}$, the label errors in $\hat{\mathcal{D}}$ are categorized into the following three types, as shown in Fig.~\ref{fig:three_type_error}. Only these three types are considered in the subsequent theoretical analysis.

\begin{type} \label{type 1}
	{\textbf{Constant uniform error:}} The label-error probability $\varphi_{\mathbf{x}_i}$ is the same constant $\varphi_0$ for all samples. Moreover, once the label $y_{\mathbf{x}_i}$ is incorrect, $\hat{y}_{\mathbf{x}_i}$ is equally likely to be assigned to any of the remaining $K-1$ classes:
	\begin{equation}
		\varphi_{\mathbf{x}_i}=\varphi_0, \quad {\varphi}_{\mathbf{x}_i, j} = \frac{\varphi_0}{K-1}, \quad 	\forall j \neq y_{\mathbf{x}_i}.
	\end{equation}
\end{type}
This model corresponds to random errors that are independent of both sample features and true classes.

\begin{type} \label{type 2}
	{\textbf{Sample-dependent uniform error:}} The label-error probability $\varphi_{\mathbf{x}_i}$ is a function of the sample $\mathbf{x}_i$, whereas ${\varphi}_{\mathbf{x}_i, j}$ remains uniformly distributed over all incorrect classes:
	\begin{equation}
		\varphi_{\mathbf{x}_i}=g\left( \mathbf{x}_i\right) \in\left[ 0,1\right], \quad {\varphi}_{\mathbf{x}_i, j} = \frac{\varphi_{\mathbf{x}_i}}{K-1}, \quad \forall j \neq y_{\mathbf{x}_i}.
	\end{equation}
\end{type}
This model describes cases in which the label-error probability varies with sample difficulty. For example, samples with low SCR or weak target charact eristics are more likely to be mislabeled.

\begin{type} \label{type 3}
	{\textbf{Class-dependent error:}} Both $\varphi_{\mathbf{x}_i}$ and ${\varphi}_{\mathbf{x}_i,j}$ depend only on the true class $y_{\mathbf{x}_i}$ of the sample, rather than on the specific sample $\mathbf{x}_i$:
	\begin{equation}
		\varphi_{\mathbf{x}_i} = \varphi_{y_{\mathbf{x}_i}}, \quad {\varphi}_{\mathbf{x}_i, j} = {\varphi}_{y_{\mathbf{x}_i}, j}.
	\end{equation}
\end{type}
This model capture class-dependent differences in label-confusion patterns. For instance, in radar echoes, small vessels may be more likely to be mislabeled as buoys, whereas large vessels rarely exhibit such errors.

\subsection{Theoretical Analysis of Robustness}\label{math}
From \eqref{eq:Correct_data_Optimal}, when the model $f_{\boldsymbol{\theta}}$ is trained on the accurately labeled dataset $\mathcal{D}$, its optimal solution in the sense of risk minimization is $f_{\boldsymbol{\theta}^\ast}$. When the model is trained on the partially mislabeled dataset $\hat{\mathcal{D}}$, the corresponding population risk is given by
\begin{equation}
	\hat{R}_\ell(f_{\boldsymbol{\theta}}) = \mathbb{E}_{\mathbf{x}_i, \hat{y}_{\mathbf{x}_i}}[\ell(f_{\boldsymbol{\theta}}(\mathbf{x}_i), \mathbf{e}_{\hat{y}_{\mathbf{x}_i}})],
\end{equation}
and its optimal solution is denoted by
\begin{equation}
	\hat{f}_{\boldsymbol{\theta}^\ast} = \arg\min_{\boldsymbol{\theta}}\hat{R}_\ell(f_{\boldsymbol{\theta}}).
\end{equation}

As a basis for constructing a robust learning framework for training detection models with partially mislabeled data, we define the robustness of the training loss as follows.

\begin{definition} \label{definition 1}
	For a given loss function $\ell$, if the optimal model $f_{\boldsymbol{\theta}^{\ast}}$ trained with accurate labels and the optimal model $\hat{f}_{\boldsymbol{\theta}^{\ast}}$ trained with partially mislabeled data achieve the same detection performance under the true data distribution $\mathcal{P}$, then $\ell$ is said to render the detection model robust to the label errors in $\hat{\mathcal{D}}$, i.e.,
	\begin{equation}
		\hspace{-0.2cm}
		\Pr_{\left( \mathbf{x},y_{{\mathbf{x}}}\right) \sim \mathcal{P}} \!\left[ {\mathrm{dec}}\!\left(f_{\boldsymbol{\theta}^\ast}\!\left( \mathbf{x}\right)\right) \!=\! y_{{\mathbf{x}}}\right]  \!= \!\!\!\!\!\Pr_{\left( \mathbf{x},y_{{\mathbf{x}}}\right) \sim \mathcal{P}}\!\left[ \mathrm{dec}\! \left(  \hat{f}_{\boldsymbol{\theta}^\ast}\!\left( \mathbf{x}\right) \right)  \!\!=\! y_{{\mathbf{x}}}\right]\!.\!\!
		\label{eq:equal_detect}
	\end{equation}
\end{definition}

In this case, if $ f_{\boldsymbol{\theta}^\ast} $ is also an optimal solution of the population risk $\hat{R}_\ell$ on $\hat{\mathcal{D}}$, i.e., 
\begin{equation}
	\hat{R}_\ell(f_{\boldsymbol{\theta}^\ast}) - \hat{R}_\ell(f_{\boldsymbol{\theta}})  \leq 0, \quad \forall {\boldsymbol{\theta}}.
\end{equation} 
then can take $\hat{f}_{\boldsymbol{\theta}^\ast} = f_{\boldsymbol{\theta}^\ast}$, and Definition~\ref{definition 1} is consequently satisfied.

\begin{definition} \label{definition 2}
	If there exists a constant $C$ such that, for any input sample $\mathbf{x}_i$ and any model output $f_{\boldsymbol{\theta}}(\mathbf{x}_i)$,
	\begin{equation}
		\ell(f_{\boldsymbol{\theta}}(\mathbf{x}_i), \mathbf{e}_k) \ge 0, \quad \sum_{k=1}^{K} \ell(f_{\boldsymbol{\theta}}(\mathbf{x}_i), \mathbf{e}_k) = C .
		\label{cont}
	\end{equation}
	then $\ell$ is said to be a symmetric loss function.
\end{definition}

Based on these definitions, we theoretically derive sufficient conditions under which symmetric loss functions render the detection model robust against partial label errors for the three label-error types defined above.

\begin{figure*}[t]
	\centering
	\includegraphics[width=0.99\textwidth, height=0.44\textwidth]{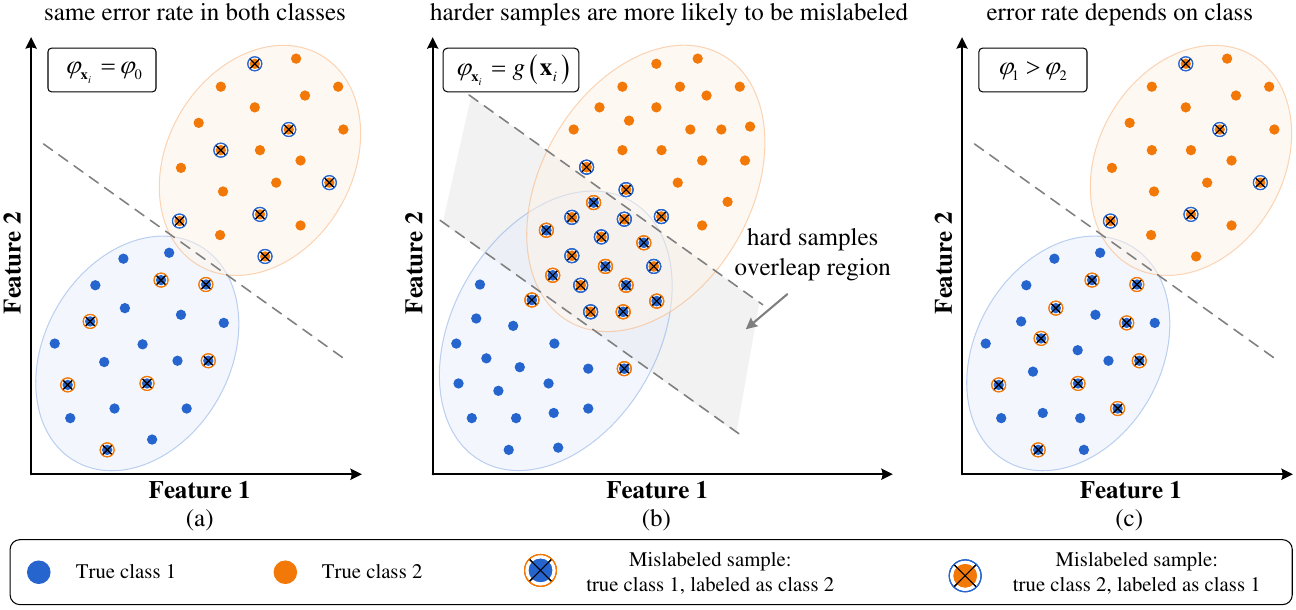}
	\vspace{-0.2cm}
	\caption{Illustration of the three representative types of label error mechanisms modeled in this pape: (a) Type~1, constant uniform error; (b) Type~2, sample-dependent uniform error; (c) Type~3, class-dependent error.}
	\label{fig:three_type_error}
	\vspace{-0.4cm}
\end{figure*}

\begin{theorem} \label{theorem 1}
	Let $\ell$ be a symmetric loss function. Under the Type~1 error model, i.e., constant uniform error, if
	\begin{equation}
		\varphi_0 < \frac{K-1}{K},
	\end{equation}
	then $\ell$ renders the detection model robust against training data affected by this type of label error. The proof is provided in Appendix~1.
\end{theorem}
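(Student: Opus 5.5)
The plan is the classical symmetric-loss argument of Ghosh \textit{et al.}: express the noisy risk $\hat{R}_\ell$ as an affine function of the clean risk $R_\ell$ with a positive slope. Then the minimizer $f_{\boldsymbol{\theta}^\ast}$ of $R_\ell$ also minimizes $\hat{R}_\ell$, and by the remark after Definition~\ref{definition 1} we may take $\hat{f}_{\boldsymbol{\theta}^\ast}=f_{\boldsymbol{\theta}^\ast}$, so \eqref{eq:equal_detect} holds trivially.

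\textbf{Step 1: condition on the true label.} For a fixed $\boldsymbol{\theta}$, write
\begin{equation*}
\hat{R}_\ell(f_{\boldsymbol{\theta}})=\mathbb{E}_{\mathbf{x},y_{\mathbf{x}}}\Big[(1-\varphi_0)\,\ell(f_{\boldsymbol{\theta}}(\mathbf{x}),\mathbf{e}_{y_{\mathbf{x}}})+\frac{\varphi_0}{K-1}\sum_{j\neq y_{\mathbf{x}}}\ell(f_{\boldsymbol{\theta}}(\mathbf{x}),\mathbf{e}_j)\Big].
\end{equation*}
This uses the Type~1 model \eqref{eq:get_label}: the observed label equals the true label with probability $1-\varphi_0$ and each wrong class with probability $\varphi_0/(K-1)$, independently of $\mathbf{x}$.

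\textbf{Step 2: apply symmetry.} By Definition~\ref{definition 2}, $\sum_{j\neq y_{\mathbf{x}}}\ell(f_{\boldsymbol{\theta}}(\mathbf{x}),\mathbf{e}_j)=C-\ell(f_{\boldsymbol{\theta}}(\mathbf{x}),\mathbf{e}_{y_{\mathbf{x}}})$. Substituting gives
\begin{equation*}
\hat{R}_\ell(f_{\boldsymbol{\theta}})=\frac{\varphi_0 C}{K-1}+\Big(1-\frac{\varphi_0 K}{K-1}\Big)R_\ell(f_{\boldsymbol{\theta}}).
\end{equation*}

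\textbf{Step 3: compare with the clean optimum.} For any $\boldsymbol{\theta}$,
\begin{equation*}
\hat{R}_\ell(f_{\boldsymbol{\theta}^\ast})-\hat{R}_\ell(f_{\boldsymbol{\theta}})=\Big(1-\frac{\varphi_0K}{K-1}\Big)\big(R_\ell(f_{\boldsymbol{\theta}^\ast})-R_\ell(f_{\boldsymbol{\theta}})\big).
\end{equation*}
The factor $1-\varphi_0K/(K-1)$ is strictly positive exactly when $\varphi_0<(K-1)/K$, which is the hypothesis of the theorem. The bracket is $\le 0$ by the optimality \eqref{eq:Correct_data_Optimal}. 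Hence the product is $\le 0$, so $f_{\boldsymbol{\theta}^\ast}$ minimizes $\hat{R}_\ell$ and the robustness of Definition~\ref{definition 1} follows.

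\textbf{Main obstacle.} The algebra is routine; the delicate point is the selection of the minimizer. Definition~\ref{definition 1} is stated for $\hat{f}_{\boldsymbol{\theta}^\ast}$, and the argmin may not be unique. I would handle this in two parts:
\begin{itemize}
\item[(i)] Follow the paper's convention that we may choose $\hat{f}_{\boldsymbol{\theta}^\ast}=f_{\boldsymbol{\theta}^\ast}$.
\item[(ii)] Note that positivity of the slope makes the two argmin sets coincide. Every noisy minimizer is then a clean minimizer. Strictly speaking, equal detection accuracy across different clean minimizers needs this identification; I would state it as the interpretation of ``the optimal model''.
\end{itemize}
I would also check that interchanging expectation over $\hat{y}$ given $(\mathbf{x},y_{\mathbf{x}})$ is justified, which holds since $\ell\ge 0$ and $\ell\le C$ by symmetry, so everything is bounded.
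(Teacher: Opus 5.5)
Your proposal is correct and follows the paper's proof in Appendix~1 essentially verbatim. Like the paper, you condition on the true label, use symmetry to write $\hat{R}_\ell(f_{\boldsymbol{\theta}})=\frac{C\varphi_0}{K-1}+\bigl(1-\frac{\varphi_0K}{K-1}\bigr)R_\ell(f_{\boldsymbol{\theta}})$ with a positive slope, and conclude that $f_{\boldsymbol{\theta}^\ast}$ also minimizes $\hat{R}_\ell$. Your further observation that the positive slope makes the two argmin sets coincide is a slight sharpening of the paper's convention $\hat{f}_{\boldsymbol{\theta}^\ast}=f_{\boldsymbol{\theta}^\ast}$, and you correctly note that equal accuracy among distinct clean minimizers is a definitional point the paper also leaves implicit.
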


\begin{theorem} \label{theorem 2}
	Let $\ell$ be a symmetric loss function. Under the Type~2 error model, i.e., sample-dependent uniform error, if, for any sample $\mathbf{x}_i \in \mathcal{X}$,
	\begin{equation}
		\varphi_{\mathbf{x}_i} < \frac{K-1}{K},
	\end{equation}
	and $R_\ell(f_{\boldsymbol{\theta}^\ast}) = 0$, then $\ell$ renders the detection model robust against training data affected by this type of label error. The proof is provided in Appendix~2.
\end{theorem}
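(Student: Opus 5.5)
The plan is the standard Ghosh et al. argument for sample-dependent uniform noise. We show that $f_{\boldsymbol{\theta}^\ast}$ also minimizes $\hat{R}_\ell$, so that one may take $\hat{f}_{\boldsymbol{\theta}^\ast}=f_{\boldsymbol{\theta}^\ast}$, as noted after Definition~\ref{definition 1}; Definition~\ref{definition 1} then holds.

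\textbf{Step 1: rewrite the noisy risk.} We expand $\hat{R}_\ell$ by conditioning on $\mathbf{x}$ and the true label $y_{\mathbf{x}}$. Under Type~2, the noisy label equals $y_{\mathbf{x}}$ with probability $1-\varphi_{\mathbf{x}}$, and equals each $j\neq y_{\mathbf{x}}$ with probability $\varphi_{\mathbf{x}}/(K-1)$. Using the symmetry condition \eqref{cont}, $\sum_{j\neq y_{\mathbf{x}}}\ell(f(\mathbf{x}),\mathbf{e}_j)=C-\ell(f(\mathbf{x}),\mathbf{e}_{y_{\mathbf{x}}})$. This gives
\begin{equation*}
\hat{R}_\ell(f_{\boldsymbol{\theta}})=\mathbb{E}_{\mathbf{x},y_{\mathbf{x}}}\!\left[\Big(1-\tfrac{K\varphi_{\mathbf{x}}}{K-1}\Big)\ell\big(f_{\boldsymbol{\theta}}(\mathbf{x}),\mathbf{e}_{y_{\mathbf{x}}}\big)\right]+\mathbb{E}_{\mathbf{x}}\!\left[\tfrac{C\varphi_{\mathbf{x}}}{K-1}\right].
\end{equation*}
The second term does not depend on $\boldsymbol{\theta}$. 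Unlike Type~1, the weight $1-K\varphi_{\mathbf{x}}/(K-1)$ varies with $\mathbf{x}$. The noisy risk is therefore no longer an affine function of $R_\ell$, and this is exactly where the extra assumption is needed.

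\textbf{Step 2: form the difference.} For any $\boldsymbol{\theta}$,
\begin{multline*}
\hat{R}_\ell(f_{\boldsymbol{\theta}^\ast})-\hat{R}_\ell(f_{\boldsymbol{\theta}})=\mathbb{E}\Big[\Big(1-\tfrac{K\varphi_{\mathbf{x}}}{K-1}\Big)\Big(\ell(f_{\boldsymbol{\theta}^\ast}(\mathbf{x}),\mathbf{e}_{y_{\mathbf{x}}})\\
-\ell(f_{\boldsymbol{\theta}}(\mathbf{x}),\mathbf{e}_{y_{\mathbf{x}}})\Big)\Big].
\end{multline*}

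\textbf{Step 3: use $R_\ell(f_{\boldsymbol{\theta}^\ast})=0$.} This is the key step. Because $\ell\ge 0$ and its expectation is zero, $\ell(f_{\boldsymbol{\theta}^\ast}(\mathbf{x}),\mathbf{e}_{y_{\mathbf{x}}})=0$ for almost every $(\mathbf{x},y_{\mathbf{x}})$. The integrand thus reduces to $-\big(1-\tfrac{K\varphi_{\mathbf{x}}}{K-1}\big)\ell(f_{\boldsymbol{\theta}}(\mathbf{x}),\mathbf{e}_{y_{\mathbf{x}}})$. The hypothesis $\varphi_{\mathbf{x}}<(K-1)/K$ makes the weight strictly positive pointwise, and $\ell\ge0$, so the integrand is pointwise nonpositive. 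Hence $\hat{R}_\ell(f_{\boldsymbol{\theta}^\ast})-\hat{R}_\ell(f_{\boldsymbol{\theta}})\le 0$ for all $\boldsymbol{\theta}$, so $f_{\boldsymbol{\theta}^\ast}$ is a minimizer of $\hat{R}_\ell$.

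\textbf{Step 4: uniqueness caveat.} If $\hat{R}_\ell$ has several minimizers, the statement needs one more remark. Any other minimizer $\hat f$ must make the integrand vanish almost surely. Since the weight is positive, this forces $\ell(\hat f(\mathbf{x}),\mathbf{e}_{y_{\mathbf{x}}})=0$ a.s., so $R_\ell(\hat f)=0$ and $\hat f$ is also optimal for clean data. Both models then achieve zero clean risk. To conclude that they have equal accuracy, I would note that for the symmetric loss to be designed, zero loss at $\mathbf{e}_{y}$ forces the output to equal $\mathbf{e}_y$, so $\mathrm{dec}$ returns $y_{\mathbf{x}}$ for both. Failing that, I would simply adopt the paper's convention of taking $\hat{f}_{\boldsymbol{\theta}^\ast}=f_{\boldsymbol{\theta}^\ast}$.
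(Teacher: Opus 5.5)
Your proof is correct and follows the paper's Appendix~2 argument essentially line for line. You expand $\hat{R}_\ell$ by conditioning on $(\mathbf{x},y_{\mathbf{x}})$, use symmetry to get the $\mathbf{x}$-dependent positive weight $1-K\varphi_{\mathbf{x}}/(K-1)$, and use $R_\ell(f_{\boldsymbol{\theta}^\ast})=0$ with nonnegativity to make the difference pointwise nonpositive, so $f_{\boldsymbol{\theta}^\ast}$ minimizes $\hat{R}_\ell$. Your Step~4 is a sound extra refinement: it shows that every minimizer of $\hat{R}_\ell$ has zero clean risk, and that for $\ell_{\mathrm{ROB}}$ this forces correct decisions a.s., whereas the paper sidesteps non-uniqueness by the convention $\hat{f}_{\boldsymbol{\theta}^\ast}=f_{\boldsymbol{\theta}^\ast}$; your ``almost surely'' is also more accurate than the paper's ``for all $\mathbf{x}$''.
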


\begin{theorem} \label{theorem 3}
	Let $\ell$ be a symmetric loss function. Under the Type~3 error model, i.e., class-dependent error, suppose that $\ell$ further satisfies, for any class $k$ and any sample $\mathbf{x}_i \in \mathcal{X}$,
	\begin{equation}
		0 \leq \ell\big(f_{\boldsymbol{\theta}}(\mathbf{x}_i), \mathbf{e}_k\big) \leq \frac{C}{K-1},
		\label{eq:theorem_3}
	\end{equation}
	and that the optimal risk under accurate labels satisfies $R_\ell(f_{\boldsymbol{\theta}^\ast}) = 0$. If, for any true label $y_{\mathbf{x}_i}$ and any incorrect class $j \neq y_{\mathbf{x}_i}$,
	\begin{equation}
		\varphi_{y_{\mathbf{x}_i},j} < 1 - \varphi_{y_{\mathbf{x}_i}} = 1- \sum_{j \neq y_{\mathbf{x}_i}} {\varphi}_{y_{\mathbf{x}_i}, j},
		\label{eq:error-probability-condition}
	\end{equation}
	then $\ell$ renders the detection model robust against training data affected by this type of label error. The proof is provided in Appendix~3. 
\end{theorem}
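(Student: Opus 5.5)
The plan follows the Ghosh et al. argument for class-conditional noise, adapted to the notation of this paper. By the remark after Definition~\ref{definition 1}, it suffices to show that $f_{\boldsymbol{\theta}^\ast}$ also minimizes $\hat{R}_\ell$, i.e. $\hat{R}_\ell(f_{\boldsymbol{\theta}^\ast})-\hat{R}_\ell(f_{\boldsymbol{\theta}})\le 0$ for all $\boldsymbol{\theta}$.

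\textbf{Step 1: rewrite the noisy risk.} First I would condition on $(\mathbf{x},y_{\mathbf{x}})$ and average over $\hat{y}_{\mathbf{x}}$ using the Type~3 transition model:
\begin{equation*}
\hat{R}_\ell(f_{\boldsymbol{\theta}})=\mathbb{E}_{\mathbf{x},y_{\mathbf{x}}}\Big[(1-\varphi_{y_{\mathbf{x}}})\,\ell(f_{\boldsymbol{\theta}}(\mathbf{x}),\mathbf{e}_{y_{\mathbf{x}}})+\sum_{j\neq y_{\mathbf{x}}}\varphi_{y_{\mathbf{x}},j}\,\ell(f_{\boldsymbol{\theta}}(\mathbf{x}),\mathbf{e}_j)\Big].
\end{equation*}
Symmetry (Definition~\ref{definition 2}) gives $\ell(f_{\boldsymbol{\theta}}(\mathbf{x}),\mathbf{e}_{y_{\mathbf{x}}})=C-\sum_{j\neq y_{\mathbf{x}}}\ell(f_{\boldsymbol{\theta}}(\mathbf{x}),\mathbf{e}_j)$. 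Substituting this, the integrand becomes
\begin{equation*}
(1-\varphi_{y_{\mathbf{x}}})C-\sum_{j\neq y_{\mathbf{x}}}\big(1-\varphi_{y_{\mathbf{x}}}-\varphi_{y_{\mathbf{x}},j}\big)\,\ell(f_{\boldsymbol{\theta}}(\mathbf{x}),\mathbf{e}_j).
\end{equation*}
The first term does not depend on $\boldsymbol{\theta}$, so it cancels in the difference. Hence
\begin{equation*}
\hat{R}_\ell(f_{\boldsymbol{\theta}^\ast})-\hat{R}_\ell(f_{\boldsymbol{\theta}})=\mathbb{E}\Big[\sum_{j\neq y_{\mathbf{x}}}\big(1-\varphi_{y_{\mathbf{x}}}-\varphi_{y_{\mathbf{x}},j}\big)\big(\ell(f_{\boldsymbol{\theta}}(\mathbf{x}),\mathbf{e}_j)-\ell(f_{\boldsymbol{\theta}^\ast}(\mathbf{x}),\mathbf{e}_j)\big)\Big].
\end{equation*}

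\textbf{Step 2: pin down $f_{\boldsymbol{\theta}^\ast}$ on the wrong classes.} Since $R_\ell(f_{\boldsymbol{\theta}^\ast})=0$ and $\ell\ge 0$, we get $\ell(f_{\boldsymbol{\theta}^\ast}(\mathbf{x}),\mathbf{e}_{y_{\mathbf{x}}})=0$ almost surely. Symmetry then forces $\sum_{j\neq y_{\mathbf{x}}}\ell(f_{\boldsymbol{\theta}^\ast}(\mathbf{x}),\mathbf{e}_j)=C$. Each of these $K-1$ terms is at most $C/(K-1)$ by \eqref{eq:theorem_3}, and they sum to $C$, so each equals exactly $C/(K-1)$. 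This is the role of the extra upper bound in the theorem: it identifies the optimal model's wrong-class losses termwise.

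\textbf{Step 3: conclude.} For any $\boldsymbol{\theta}$, the bound \eqref{eq:theorem_3} gives
\begin{equation*}
\ell(f_{\boldsymbol{\theta}}(\mathbf{x}),\mathbf{e}_j)-\ell(f_{\boldsymbol{\theta}^\ast}(\mathbf{x}),\mathbf{e}_j)=\ell(f_{\boldsymbol{\theta}}(\mathbf{x}),\mathbf{e}_j)-\frac{C}{K-1}\le 0.
\end{equation*}
Condition \eqref{eq:error-probability-condition} makes every weight $1-\varphi_{y_{\mathbf{x}}}-\varphi_{y_{\mathbf{x}},j}$ strictly positive. Each term in Step~1 is therefore a positive weight times a nonpositive difference, so the whole expectation is $\le 0$. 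Hence $f_{\boldsymbol{\theta}^\ast}$ minimizes $\hat{R}_\ell$, one may take $\hat{f}_{\boldsymbol{\theta}^\ast}=f_{\boldsymbol{\theta}^\ast}$, and \eqref{eq:equal_detect} holds.

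\textbf{Main obstacle.} The delicate point is Step~2. Without \eqref{eq:theorem_3} we would only control the sum of the wrong-class losses, not each term, and the class-dependent weights in Step~1 differ across $j$, so the argument would fail. A secondary issue is that the minimizer $\hat{f}_{\boldsymbol{\theta}^\ast}$ may not be unique. I would handle this as in the earlier remark: robustness is attained by choosing $\hat{f}_{\boldsymbol{\theta}^\ast}=f_{\boldsymbol{\theta}^\ast}$. Alternatively, strict positivity of the weights shows that any noisy minimizer must also satisfy $\ell(\cdot,\mathbf{e}_j)=C/(K-1)$ for every wrong class $j$, hence zero loss on the true class almost surely. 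It therefore also attains $R_\ell=0$ and yields the same decisions whenever zero loss on the true class implies the argmax equals the true label.
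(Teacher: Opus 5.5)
Your proposal is correct and matches the paper's proof in Appendix~3 essentially step for step: you use the symmetry substitution that produces the weights $\xi_{y_{\mathbf{x}},j}=1-\varphi_{y_{\mathbf{x}}}-\varphi_{y_{\mathbf{x}},j}>0$, then combine $R_\ell(f_{\boldsymbol{\theta}^\ast})=0$ with the upper bound $C/(K-1)$ to force $\ell(f_{\boldsymbol{\theta}^\ast}(\mathbf{x}),\mathbf{e}_j)=C/(K-1)$ for every wrong class, and finish with the same sign argument. Your closing remark is a small but valid strengthening that the paper does not make: by strict positivity of the weights, any minimizer of the noisy risk must itself attain zero clean risk, whereas the paper only exhibits $\hat{f}_{\boldsymbol{\theta}^\ast}=f_{\boldsymbol{\theta}^\ast}$ as one admissible choice.
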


This error-probability condition of \eqref{eq:error-probability-condition} in Theorem~\ref{theorem 3} implies that the corresponding label transition matrix is strictly diagonally dominant. Specifically, for each true class, the probability that its label remains correct is strictly greater than the probability that it is transferred to any  incorrect class.

These theorems show that symmetry is a key structural condition for a loss function to achieve robustness in the presence of partial label errors, and the conclusions of all theorems are independent of the specific form of the data distribution. In comparison, Theorem~\ref{theorem 1} has the mildest applicability condition, requiring only that the error probability satisfy $\varphi_0<(K-1)/K$. By contrast, Theorems~\ref{theorem 2} and~\ref{theorem 3} additionally require that the optimal risk under accurate labels satisfy $R_\ell(f_{\boldsymbol{\theta}^\ast})=0$. This condition requires the deep-learning-based detection model to have sufficient representational capacity to fit the training data adequately. For deep neural networks, this requirement is not difficult to meet. In addition, Theorem~\ref{theorem 3} requires the symmetric loss function to satisfy a boundedness condition; therefore, its scope of applicability is comparatively more restricted.

\subsection{Framework Design}\label{Loss Design}
The preceding theoretical analysis shows that, for a loss function used to train the detection model, satisfying the symmetry condition in Definition~\ref{definition 2} is essential for establishing theoretical robustness guarantees against label errors in partially mislabeled training data. We therefore examine the structural properties of loss functions commonly used in supervised learning frameworks. The CE loss and MSE loss are defined as
\begin{align}
	\!\!&\ell_{\text{CE}}(f_{\boldsymbol{\theta}}(\mathbf{x}_i), \mathbf{e}_j) \!=\! -\!\!\sum_{k=1}^{K}\! \mathbf{e}_j^k \log\! {\left[ f_{\boldsymbol{\theta}}(\mathbf{x}_i)\right]_k} \!=\! -\!\log\! {\left[ f_{\boldsymbol{\theta}}(\mathbf{x}_i)\right] _j},\!\\
	&\ell_{\text{MSE}}(f_{\boldsymbol{\theta}}(\mathbf{x}_i), \!\mathbf{e}_j) \!\!=\! \left\| \mathbf{e}_j \!-\!\! f_{\boldsymbol{\theta}}(\mathbf{x}_i)\right\|_2^2 \!\!=\!\! ||f_{\boldsymbol{\theta}}(\mathbf{x}_i)||_2^2 \!\!+\!\! 1 \!\!-\!\! 2\left[ f_{\boldsymbol{\theta}}(\mathbf{x}_i)\right]_j, \nonumber
\end{align}
where $\mathbf{e}_j^i$ denotes the $i$-th component of $\mathbf{e}_j$. Summing the losses over all classes yields
\begin{equation}
	\begin{aligned}
		&\quad\sum_{k=1}^{K}\ell_{\text{CE}}(f_{\boldsymbol{\theta}}(\mathbf{x}_i), \mathbf{e}_k) =- \sum_{k=1}^{K} \log {\left[ f_{\boldsymbol{\theta}}(\mathbf{x}_i)\right] _k},\\
		&\sum_{k=1}^{K}\ell_{\text{MSE}}(f_{\boldsymbol{\theta}}(\mathbf{x}_i), \mathbf{e}_k) = K ||f_{\boldsymbol{\theta}}(\mathbf{x}_i)||_2^2 + K - 2.
	\end{aligned}
\end{equation}
Thus, for both the CE and MSE losses, the sum of losses over all classes depends on the model output $f_{\boldsymbol{\theta}}(\mathbf{x}_i)$. Consequently, neither loss satisfies the symmetry condition. Therefore, the robustness results established above for symmetric loss functions cannot be directly applied to the CE or MSE loss commonly used in standard supervised learning frameworks. This analysis indicates that, to provide theoretical robustness guarantees for the learning framework in the presence of partially mislabeled training data, it is necessary to construct a loss function that satisfies the symmetry condition.

Motivated by this, we design the following robust loss function:
\begin{equation}
	\ell_{\text{ROB}}\left( f_{\boldsymbol{\theta}}(\mathbf{x}_i), \mathbf{e}_k\right) = 1 - \left[ f_{\boldsymbol{\theta}}(\mathbf{x}_i)\right] _k, \ \forall k \in \left\lbrace1,\cdots,K \right\rbrace .
	\label{eq:we_loss}
\end{equation}
This loss depends only on the output component corresponding to the labeled class and requires no additional hyperparameters. Since the model output satisfies the probability normalization condition $\sum_{k=1}^{K}f_{\boldsymbol{\theta}}(\mathbf{x}_i)_k=1$, we have
\begin{equation}
	\sum_{k=1}^{K}\ell_{\text{ROB}}(f_{\boldsymbol{\theta}}(\mathbf{x}_i), \mathbf{e}_k) = K - 1.
\end{equation}
This verifies that $\ell_{\mathrm{ROB}}$ satisfies the symmetry condition with $C=K-1$.

We further verify whether $\ell_{\mathrm{ROB}}$ satisfies the remaining conditions in the preceding theorems. First, since $\ell_{\text{ROB}}$ is symmetric loss, Theorem~\ref{theorem 1} can be directly applied when the label-error probability satisfies the required constraint. Second, substituting $C=K-1$ into the boundedness condition in Theorem~\ref{theorem 3} gives
\begin{equation}
	0 \leq \ell(f_{\boldsymbol{\theta}}\left( \mathbf{x}_i\right), \mathbf{e}_k) \leq \frac{C}{K-1} = 1, \;\forall k \in \left\lbrace1,\cdots,K \right\rbrace .
\end{equation}
According to \eqref{eq:we_loss} and the nonnegativity of the detection model output, $\ell_{\mathrm{ROB}}$ clearly satisfies this boundedness condition. For the zero-risk condition $R_\ell(f_{\boldsymbol{\theta}^\ast}) = 0$ in Theorems~\ref{theorem 2} and~\ref{theorem 3}, if the deep-learning-based detection model has sufficient data-fitting capacity such that $f_{\boldsymbol{\theta}^\ast}$ can output detection results consistent with the true labels for almost every sample, i.e.,
\begin{equation}
	\left[ f_{\boldsymbol{\theta}^\ast}(\mathbf{x}_i)\right] _{y_{\mathbf{x}_i}}=1 \quad \mathrm{a.s.}\;(\mathbf{x}_i,y_{\mathbf{x}_i})\sim\mathcal{P},
\end{equation}
then, from \eqref{eq:we_loss}, we obtain
\begin{equation}
	\ell_{\mathrm{ROB}} \left( f_{\boldsymbol{\theta}^{\ast}}(\mathbf{x}_i),\mathbf{e}_{y_{\mathbf{x}_i}} \right)
	= 1-[f_{\boldsymbol{\theta}^{\ast}}(\mathbf{x}_i)]_{y_{\mathbf{x}_i}} 	= 0 .
\end{equation}
Together with \eqref{eq:R_L}, this implies that $R_\ell(f_{\boldsymbol{\theta}^\ast}) = 0$.

\begin{figure*}[t]
	\centering
	\includegraphics[width=1.0\textwidth, height=0.22\textwidth]{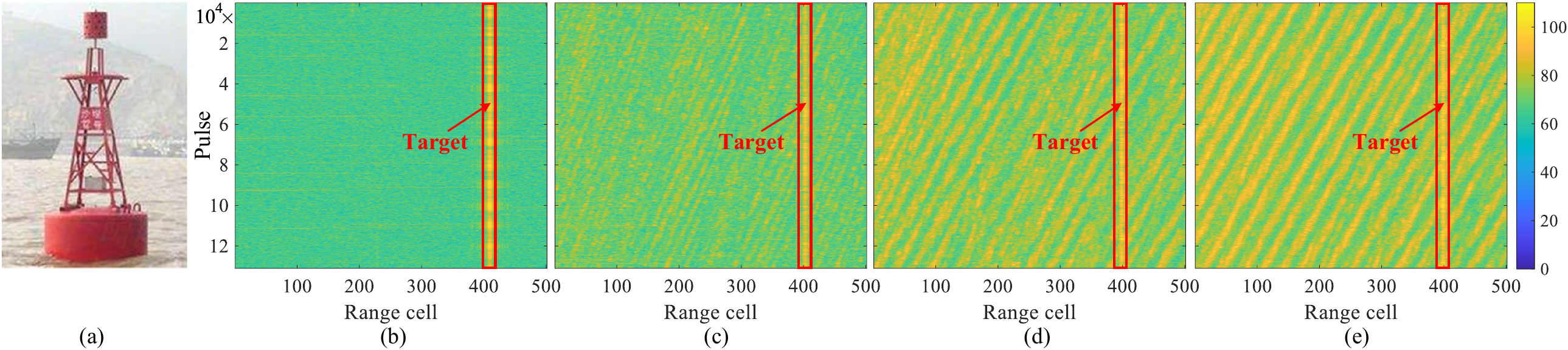}
	\vspace{-0.6cm}
	\caption{MTDSP dataset. (a) Observed buoy target. (b)-(e) Radar echo signals under sea states 2--5. As the sea state increases, the interference from sea clutter becomes progressively stronger.}
	\label{fig:MTDSP}
	\vspace{-0.4cm}
\end{figure*}

Consequently, when the label-error probabilities satisfy the conditions specified in the corresponding theorems, $\ell_{\text{ROB}}$ is theoretically robust under all three considered types of partial label errors. Therefore, replacing the loss function in the standard supervised learning framework with $\ell_{\text{ROB}}$ yields the proposed robust learning framework, SL-RLF, for deep-learning-based radar target detection with partially mislabeled training data. Moreover, because $\ell_{\text{ROB}}$ requires no additional hyperparameters, SL-RLF follows the same training workflow as standard supervised learning with accurately labeled training data. Accordingly, under SL-RLF, the detection model can be trained without modifying its network architecture or training pipeline, thereby mitigating the adverse impact of partial label errors on model parameter learning.

\begin{table}[t]
	\renewcommand{\arraystretch}{1.3}
	\caption{Radar Parameters of MTDSP Database}
	\label{MTDSP Radar parameters}
	\centering
	\setlength{\tabcolsep}{2.2mm}{ 
		\vspace{-0.2cm}
		\begin{tabular}{cccc}
			\bottomrule
			\cmidrule{1-4} 
			{\textbf{Band Width}} & \textbf{Carrier Frequency} & \textbf{PRF} & \textbf{Sample Frequency} \\  
			\cmidrule{1-4} 
			25 MHz & 9.4 GHz & 0.5 ms & 60 MHz\\
			\bottomrule
			\cmidrule{1-4}
	\end{tabular}}
	\vspace{-0.6cm}
\end{table}

\section{Experiments and Results} \label{section:Experiments}
This section first describes the experimental setup, including the datasets, the architecture of the deep-learning-based detection model, training and implementation details, and performance evaluation metrics. On this basis, the proposed SL-RLF is compared with representative methods, including CE~\cite{lecun2015deep}, MSE~\cite{dohare2024loss}, Co-teaching~\cite{han2018co}, and TCE~\cite{10.5555/3491440.3491745}, to evaluate its robustness under different error types and error intensities. Subsequently, ablation experiments are conducted to analyze the effects of the number of target classes, network architecture of the detection model, sea-state level, and polarization mode on the performance of SL-RLF, and to examine the consistency between the experimental results and the theoretical analysis.

\begin{figure*}[t]
	\centering
	\includegraphics[width=1.0\textwidth, height=0.11\textwidth]{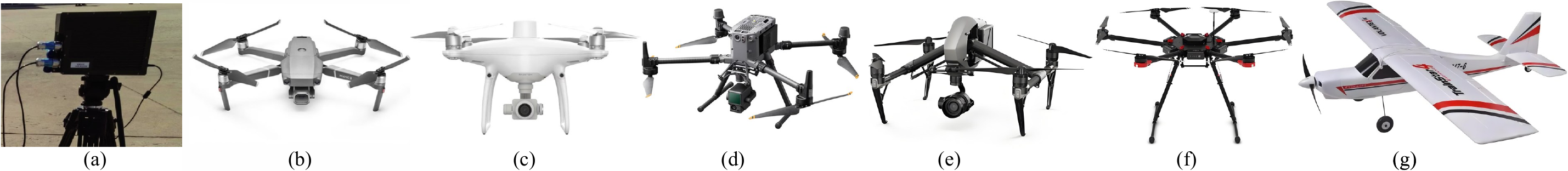}	
	\vspace{-0.6cm}
	\caption{LSS-FMCWR dataset. (a) Ku-band FMCW radar. (b) DJI Mavic 2; (c) DJI Elf; (d) DJI M350; (e) DJI Wu 2; (f) DJI M600; (g) fixed-wing UAV.}
	\label{fig:LSS_FMCWR}
	\vspace{-0.5cm}
\end{figure*}

\subsection{Dataset Description}
We use two publicly available radar datasets in the experiments: the Maritime Target Detection Shared Program (MTDSP) dataset~\cite{jian2023sea} and the Frequency-Modulated Continuous-Wave Radar for Low, Slow, and Small targets (LSS-FMCWR) dataset~\cite{xiaolong2023multiband}.

{\bf{MTDSP:}} This dataset was collected by the SPPR50P X-band solid-state coherent radar deployed in Yantai, Shandong Province, China. Its main operating parameters are listed in Table~\ref{MTDSP Radar parameters}. The radar collected echoes under both HH and VV polarization modes. The reference target was a steel buoy located 2.97 nautical miles from the radar. The data were collected under sea states ranging from 2 to 5. As the sea state increases, the sea-clutter background generally becomes more complex and the interference intensity increases accordingly. Therefore, this dataset is suitable for evaluating the detection performance of different methods under varying sea-clutter intensities. The observed target and radar echo signals are shown in Fig.~\ref{fig:MTDSP}.

{\bf{LSS-FMCWR:}} This dataset was collected using a Ku-band FMCW radar and contains data from six classes of unmanned aerial vehicle (UAV) targets. The main operating parameters of the radar during data acquisition are listed in Table~\ref{LSS-FMCWR Radar parameters}.  Fig.~\ref{fig:LSS_FMCWR} shows the radar prototype and the observed targets. Since this dataset contains multiple target classes, it can be used to analyze the influence of the number of target classes on the robustness of the proposed method.

\begin{table}[t]
	\renewcommand{\arraystretch}{1.3}
	\caption{Radar Parameters of LSS-FMCWR Database}
	\label{LSS-FMCWR Radar parameters}
	\centering
	\setlength{\tabcolsep}{2.0mm}{ 
		\vspace{-0.2cm}
		\begin{tabular}{cccc}
			\bottomrule
			\cmidrule{1-4} 
			{\textbf{Band Width}} & \textbf{Carrier Frequency} & \textbf{PRF} & \textbf{Sample Frequency} \\  
			\cmidrule{1-4} 
			100 MHz & 23.7 GHz & 0.3 ms & 500 KHz\\
			\bottomrule
			\cmidrule{1-4}
	\end{tabular}}
	\vspace{-0.6cm}
\end{table}

\subsection{Model Architecture} \label{section:Model_Architecture}
To verify the generality of the proposed SL-RLF with respect to the network architecture of the detection model, this paper adopts two representative convolutional neural network architectures, ResNet~\cite{he2016deep} and VGG~\cite{simonyan2015very}, for radar target detection. Fig.~\ref{resnet18} illustrates the detailed network structure of ResNet and its basic workflow for radar target detection. ResNet enables cross-layer information propagation through residual connections, which helps alleviate the vanishing-gradient problem in deep network training~\cite{9143512}. Owing to its strong feature representation capability and stable training behavior, ResNet is used as the primary detection model in most performance comparison experiments in this paper. In contrast, as shown in Fig.~\ref{fig:vgg}, VGG adopts a sequentially stacked convolutional structure without cross-layer residual connections, in which hierarchical features are progressively extracted using multiple small convolutional kernels.  Therefore, VGG and ResNet differ substantially in network connectivity, information propagation paths, and feature extraction structures. Based on these structural differences, ResNet and VGG are used as detection models in the ablation experiments to evaluate the architectural generalizability of SL-RLF under partially mislabeled training data.

\begin{figure}[t]
	\centering
	\includegraphics[width=0.45\textwidth, height=0.55\textwidth]{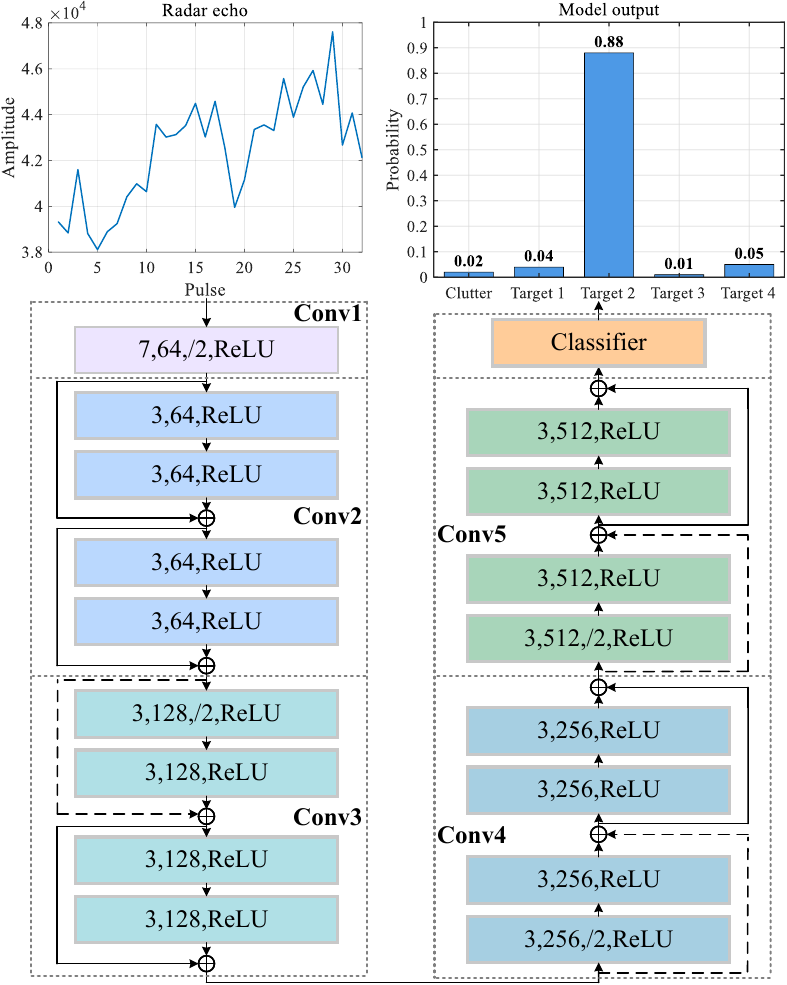}
	\caption{Architecture of the ResNet-based detection model. $7,64,/2,\text{ReLU}$ indicates that the corresponding convolutional module uses a kernel size of 7, 64 output channels, a stride of 2, and the ReLU activation function. Each convolutional layer is followed by batch normalization. It should be noted that the radar signal amplitude map shown in the figure is used only to illustrate the echo form for structural visualization, and is not the actual input to the network. In the implementation, the network input consists of two channels corresponding to the real and imaginary parts of the radar echo signal, so as to preserve the complete complex-valued echo information.}
	\label{resnet18}
	\vspace{-0.4cm}
\end{figure}

\begin{figure*}[t]
	\centering
	\includegraphics[width=1.0\textwidth, height=0.145\textwidth]{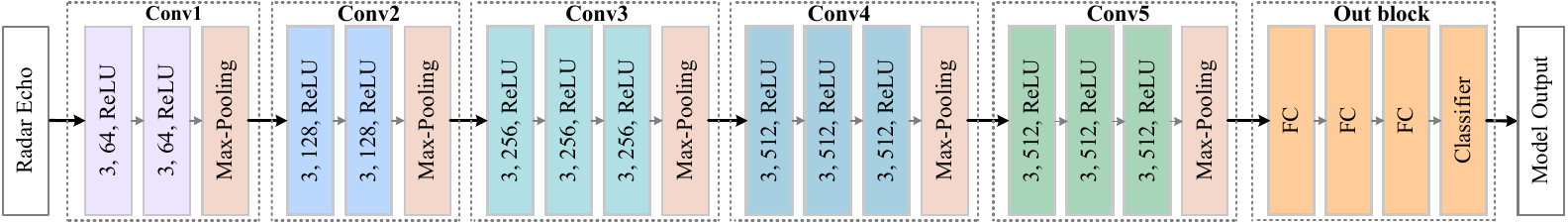}
	\vspace{-0.6cm}	
	\caption{Architecture of the VGG-based detection model. The model uses the same input and output settings as the ResNet-based detection model. The max-pooling layers downsample the resolution of the extracted feature maps, and FC denotes a fully connected layer.}
	\label{fig:vgg}
	\vspace{-0.4cm}
\end{figure*}

\subsection{Implementation Details} \label{section:implementation_details}
To enable the deep-learning-based detection model to process complex-valued radar echo signals, the real and imaginary parts of each echo sample are concatenated along the feature-channel dimension to form a two-channel real-valued input, which is then fed into the detection model. The model parameters are initialized using Kaiming initialization~\cite{10.1109/ICCV.2015.123} and optimized using stochastic gradient descent (SGD)~\cite{Loshchilov2017DecoupledWD}, with the learning rate set to 0.001. All experiments are conducted on a high-performance computing platform equipped with dual Intel(R) Xeon(R) Platinum 8380 CPUs, 1024 GB memory, and eight NVIDIA A40 GPUs. Unless otherwise specified, the number of pulses in each input sample is set to $P=64$, and all experimental results are reported as the average over five independent runs.

For the sample-dependent uniform error type (Type~\ref{type 2}), the label-error probability $\varphi_{\mathbf{x}_i}$ is determined by a sample-dependent function $g(\mathbf{x}_i)\in[0,1]$. To investigate the influence of different forms of $g$ on the robustness of SL-RLF under partial label errors, we construct the following four types of $g$.

\textbf{Uniform distribution:} To make the output of $g$ approximately follow $\text{Unif}[0,1]$, we construct $g$ using a three-layer multilayer perceptron (MLP), followed by a triangular-wave function $f_{\text{sw}}$ that maps the MLP output to $[0,1]$:
\begin{equation}
	f_{\text{sw}}(x) = 1 - \left| (x \bmod 2) - 1 \right|.
\end{equation}
Here, $\bmod$ denotes the modulo operation, and $\left| \cdot \right|$ denotes the absolute value. During MLP training, maximum mean discrepancy (MMD) is used to quantify the discrepancy between the output distribution of $g(\mathbf{x})$ and $\text{Unif}[0,1]$,  and this discrepancy is used as the loss function to guide optimization of the MLP parameters~\cite{li2017mmd}.

\textbf{Logarithmic distribution:} After training an MLP such that its output approximately follows $\text{Unif}[0,1]$, we further apply a monotonic transformation to construct a logarithmic-uniform-type distribution:
\begin{equation}
	g(\mathbf{x})=2^{\text{MLP}(\mathbf{x})} - 1.
\end{equation}
The derivation showing that this transformation induces the desired logarithmic-uniform-type distribution is provided in Appendix~4.

\textbf{Beta distribution:} According to the random-variable transformation formula, if $U \sim \text{Unif}[0,1]$, then $U^2$ has density
$f(y) = \frac{1}{2\sqrt{y}}$ on $[0,1]$, which follows the $\mathrm{Beta}\left(\frac{1}{2}, 1\right)$ distribution. Accordingly, we set
\begin{equation}
	g(\mathbf{x})=\text{MLP}(\mathbf{x}) \cdot \text{MLP}(\mathbf{x}).
\end{equation}
In this way, the output distribution of $g(\mathbf{x})$ approximately follows a $\mathrm{Beta}\left(0.5,1\right)$ distribution.

\textbf{Truncated Gaussian distribution:} To construct a Gaussian type error probability defined on $[0,1]$, we first specify the original Gaussian distribution as $\mathcal{N}\left(0.5,0.25^2 \right)$, such that its mean is located at the center of the interval. The distribution is then truncated to $[0,1]$ and normalized, resulting in the target probability density function
\begin{equation}
	p_{\mathrm{tg}}(z)
	=
	\frac{
		\phi\!\left(\dfrac{z-\mu}{\sigma}\right)
	}{
		\sigma\left[
		\Phi\!\left(\dfrac{1-\mu}{\sigma}\right)
		-
		\Phi\!\left(\dfrac{0-\mu}{\sigma}\right)
		\right]
	},
	\ z\in[0,1],
\end{equation}
where $\phi(\cdot)$ and $\Phi(\cdot)$ denote the probability density function and cumulative distribution function of the standard Gaussian distribution, respectively, and $\mu$ and $\sigma$ denote the mean and standard deviation of the original Gaussian distribution. Next, a three-layer MLP is employed to construct the mapping $g:\mathcal{X}\to[0,1]$. The MMD between the output distribution of $g(\mathbf{x})$ and the target distribution $p_{\mathrm{TG}}(z)$ is then minimized to optimize the MLP parameters, such that the output of $g(\mathbf{x})$ approximately follows the target distribution.

After obtaining the above mappings, a scaling factor $\alpha$ is introduced to control the upper bound of the label-error probability:
\begin{equation}
	\varphi_{\mathbf{x}_i} = \alpha g(\mathbf{x}_i) \le \alpha .
	\label{eq:upper_error}
\end{equation}
This setting enables experiments with different proportions of mislabeled samples in the training data and allows us to evaluate the robustness of each method with respect to the label-error probability.

\begin{table}[t]
	\renewcommand{\arraystretch}{1.3}
	\caption{HH-Polarized Data of MTDSP Dataset Under Sea State 2}
	\label{MTDSP HH 2}
	\centering
	\setlength{\tabcolsep}{3.5mm}{ 
		\vspace{-0.2cm}
		\begin{tabular}{cc}
			\bottomrule
			\cmidrule{1-2} 
			20221115040022\_stare\_HH.mat & 20221115050027\_stare\_HH.mat \\ 
			20221115070036\_stare\_HH.mat & 20221115080041\_stare\_HH.mat \\
			\bottomrule
			\cmidrule{1-2}
	\end{tabular}}
	\vspace{-0.6cm}
\end{table}
\subsection{Evaluation Metrics}
To evaluate the robustness of different methods against partial label errors, we train the detection model using each method on the same partially mislabeled training datasets and evaluate its detection performance on accurately labeled test datasets. Under the same model architecture and training settings, higher test performance indicates stronger robustness to partial label errors during training.

For the MTDSP dataset, this detection task contains only two classes, namely clutter and target. Therefore, the detection probability at a fixed false-alarm rate is adopted as the evaluation metric and denoted by $P_{\mathrm{d}}^{\mathrm{fa}}$. This metric reflects the target detection capability under a specified false-alarm constraint and is therefore appropriate for evaluating binary radar detection performance. Unless otherwise specified, the false-alarm rate is set to $P_{\mathrm{fa}}=0.0001$, and the corresponding metric is denoted by $P_{\mathrm{d}}^{0.0001}$. In the implementation, we follow the procedure in~\cite{11124398} to compute $P_{\mathrm{d}}^{0.0001}$.

For the LSS-FMCWR dataset, which contains multiple target classes, maintaining a constant false-alarm rate across classes is less straightforward than in binary detection tasks. Therefore, the macro-averaged F1 score (Macro-F1) is adopted as the evaluation metric. This metric jointly accounts for the precision and recall of each class and thus provides a comprehensive characterization of multiclass detection performance. It is defined as
\begin{equation}
	\hspace{-0.1cm}
	\begin{aligned}
		&\text{Precision}_k\!=\!\frac{\text{TP}_k}{\text{TP}_k+\text{FP}_k},\qquad \text{Recall}_k\!=\!\frac{\text{TP}_k}{\text{TP}_k+\text{FN}_k}, \\
		&\text{F1}_k\!=\!\frac{2\cdot \text{Precision}_k \cdot \text{Recall}_k}{\text{Precision}_k+\text{Recall}_k},
		\ {\text{Macro-F1}}\!=\!\frac{1}{K}\sum_{k=1}^{K}\text{F1}_k.
	\end{aligned}
\end{equation}
Here, $\text{TP}_k$ denotes the number of samples correctly classified as class $k$, $\text{FP}_k$ denotes the number of samples from other classes incorrectly classified as class $k$, and $\text{FN}_k$ denotes the number of class $k$ samples incorrectly classified as other classes.

\subsection{Performance Comparison}
To evaluate the robustness of the proposed SL-RLF under partial label errors in the training data, we select CE~\cite{lecun2015deep}, MSE~\cite{dohare2024loss}, Co-teaching~\cite{han2018co}, and TCE~\cite{10.5555/3491440.3491745} as comparison methods. Specifically, CE and MSE are commonly used loss functions in standard supervised learning frameworks and are typically adopted for model training with accurate labels; thus, they serve as non-robust baselines. Co-teaching is a robust learning framework based on training-strategy modification, which mitigates the adverse effects of mislabeled samples on model parameter updates through dual-network collaborative training and a sample-selection strategy. TCE is a robust learning framework based on loss-function design and follows a technical route similar to that of the proposed SL-RLF, namely improving the robustness of detection models to partial label errors by modifying the loss function in the standard supervised learning framework.

The experiments are conducted by constructing accurately labeled training, validation, and test sets based on the data corresponding to Table~\ref{MTDSP HH 2}. The labels in the training set are then perturbed according to the probabilistic transition models defined in Type~\ref{type 1}, Type~\ref{type 2}, and Type~\ref{type 3}, respectively, to generate different types of label errors.

\begin{figure}[t]
	\centering
	\includegraphics[width=0.48\textwidth, height=0.28\textwidth]{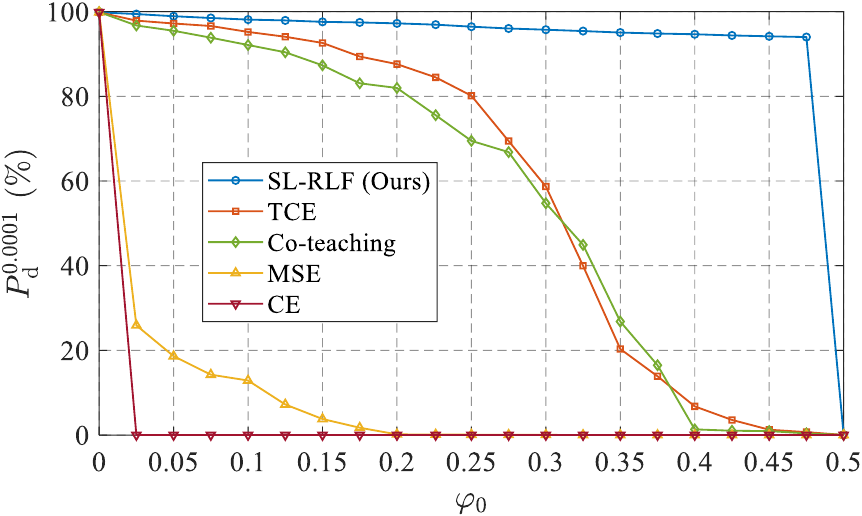}
	\vspace{-0.2cm}
	\caption{Detection performance on accurately labeled test data for models trained by different methods with partially mislabeled training data under constant uniform label errors (Type~\ref{type 1}) in a two-class detection task.}
	\label{fig:Uniform_data_comper}
	\vspace{-0.5cm}
\end{figure}

{\bf{Constant uniform error (Type~\ref{type 1}):}} Fig.~\ref{fig:Uniform_data_comper} presents the performance comparison of different methods under constant uniform label errors. Because CE and MSE do not include mechanisms for mitigating label errors, their performance decreases markedly as the error probability $\varphi_0$ increases. This result indicates that conventional loss functions designed for accurately labeled data have limited robustness in this setting. In contrast, Co-teaching and TCE mitigate the performance degradation caused by label errors to some extent; however, their performance still decreases noticeably as $\varphi_0$ further increases. Compared with these methods, the proposed SL-RLF maintains higher and more stable detection performance over the range $\varphi_0<0.5$, demonstrating stronger robustness to partial label errors.

Since this experiment considers a binary detection task, the number of classes is $K=2$. According to Theorem~\ref{theorem 1}, under the constant uniform error mechanism, the condition under which a robust learning framework constructed with a symmetric loss function guarantees theoretical robustness to partial label errors is $\varphi_0 < \left( K-1 \right) / K = 0.5$. Because the proposed SL-RLF employs $\ell_{\mathrm{ROB}}$, which satisfies the symmetry condition, as the training loss, this condition directly specifies the theoretical robustness range of SL-RLF under this error mechanism. As shown in Fig.~\ref{fig:Uniform_data_comper}, the range over which SL-RLF remains robust in the experiments is generally consistent with this theoretical bound, thereby validating the theoretical analysis in Theorem~\ref{theorem 1}.

\begin{figure}[t]
	\centering
	\includegraphics[width=0.48\textwidth, height=0.29\textwidth]{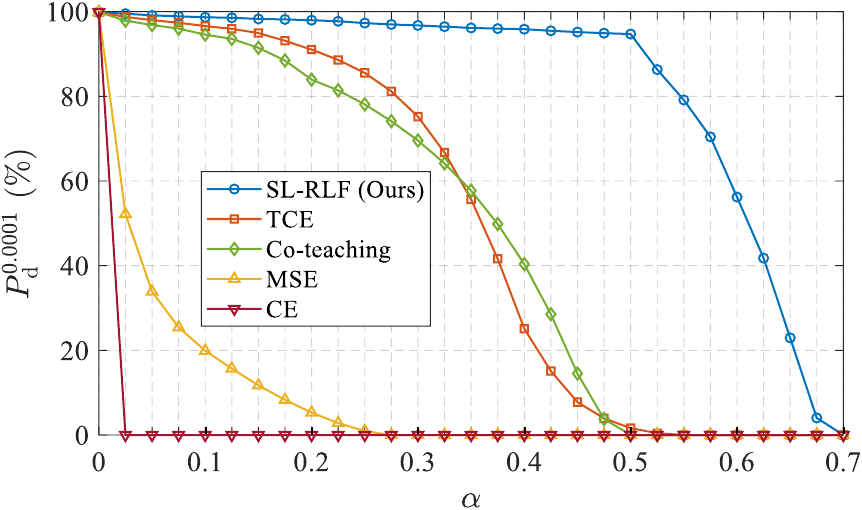}
	\vspace{-0.2cm}
	\caption{Detection performance on accurately labeled test data for models trained by different methods with partially mislabeled training data under sample-dependent uniform error (Type~\ref{type 2}) in a two-class detection task.}
	\label{fig:NonUniform_data_comper}
	\vspace{-0.1cm}
\end{figure}

\begin{figure}[t]
	\centering
	\includegraphics[width=0.49\textwidth, height=0.53\textwidth]{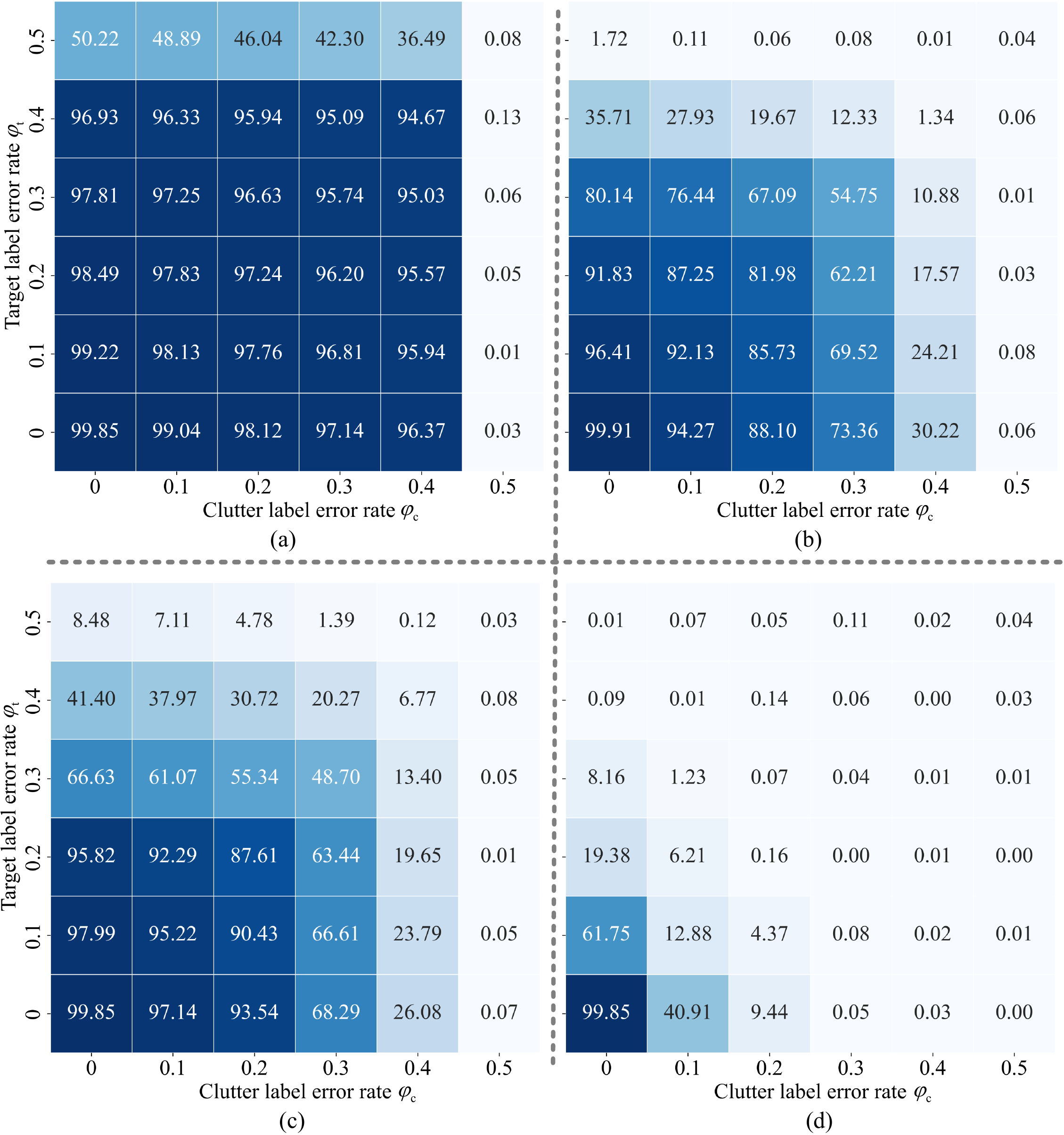}
	\vspace{-0.7cm}
	\caption{Detection performance on accurately labeled test data for models trained by different methods with partially mislabeled training data under class-dependent error (Type~\ref{type 3}) in a two-class detection task. (a) The proposed SL-RLF; (b) Co-teaching; (c) TCE; (d) MSE.}
	\label{fig:class_data_comper}
	\vspace{-0.5cm}
\end{figure}

It should be noted that Theorem~\ref{theorem 1} characterizes robustness to partial label errors in terms of population risk. In practical training, however, the detection model can only compute  the empirical risk over finite training samples as an approximation to the population risk, and the resulting solution usually only approximates the population-risk minimizer~\cite{bottou2018optimization}. Therefore, even when $\varphi_0$ lies within the theoretically valid range, the learned model may deviate from the population-risk minimizer in finite-sample training. Moreover, as the proportion of mislabeled samples increases, the amount of reliable supervisory information available for effective parameter updates decreases, which may lead to slight degradation in detection performance. Accordingly, the performance attenuation observed in Fig.~\ref{fig:Uniform_data_comper} over the range $\varphi_0<0.5$ does not contradict the theoretical conclusion.

{\bf{Sample-dependent uniform error (Type~\ref{type 2}):}} When constructing the sample-dependent uniform label errors, we first use a function $g$ whose output follows a uniform distribution to generate the label-error probability of each sample, and then control the upper bound of the error probability through the scaling factor $\alpha$ according \eqref{eq:upper_error}. The corresponding experimental results are shown in Fig.~\ref{fig:NonUniform_data_comper}. The results show that, when the condition required by Theorem~\ref{theorem 2}, i.e.,
\begin{equation}
	\max_{\mathbf{x}_i} \varphi_{\mathbf{x}_i} = \alpha < \frac{K-1}{K} = 0.5
\end{equation}
is satisfied, SL-RLF exhibits strong robustness and achieves overall better performance than comparison methods such as Co-teaching and TCE. This result is consistent with the theoretical analysis in Theorem~\ref{theorem 2}. 

Notably,  when $\alpha\ge0.5$, Fig.~\ref{fig:NonUniform_data_comper} shows that SL-RLF still retains a certain degree of robustness. This observation does not contradict Theorem~\ref{theorem 2}, because the theorem provides a sufficient condition for robustness rather than a necessary one. Therefore, when the condition is not satisfied, the theoretical guarantee no longer holds, but this does not imply that the robustness of the method must immediately fail.

In addition, under sample-dependent uniform errors, SL-RLF does not exhibit an abrupt performance change near the theoretical threshold $\alpha=0.5$, as observed in the constant uniform error case  (Type~\ref{type 1}). This is because, under the current experimental setting, $\alpha$ only controls the upper bound of the sample error probability $\varphi_{\mathbf{x}_i}$, whereas the error probability of each sample is generated by $g(\mathbf{x}_i)$. When $\alpha\geq 0.5$, some samples still satisfy $\varphi_{\mathbf{x}_i}<0.5$, and these samples can continue to provide effective supervisory information. In contrast, under constant uniform errors, once $\varphi_0=0.5$, the labels of all samples are flipped with probability $0.5$. In this case, the training labels no longer contain effective discriminative information about the true labels, making it difficult for the model to maintain robustness. As $\alpha$ further increases in the sample-dependent uniform errors type, the proportion of samples satisfying $\varphi_{\mathbf{x}_i}>0.5$ gradually grows. These samples are more likely to provide misleading supervisory signals, thereby causing a significant degradation in model robustness. This phenomenon is consistent with the sufficient-condition nature of Theorem~\ref{theorem 2} and the theoretical analysis in Appendix~2.

\begin{figure}[t]
	\centering
	\includegraphics[width=0.48\textwidth, height=0.29\textwidth]{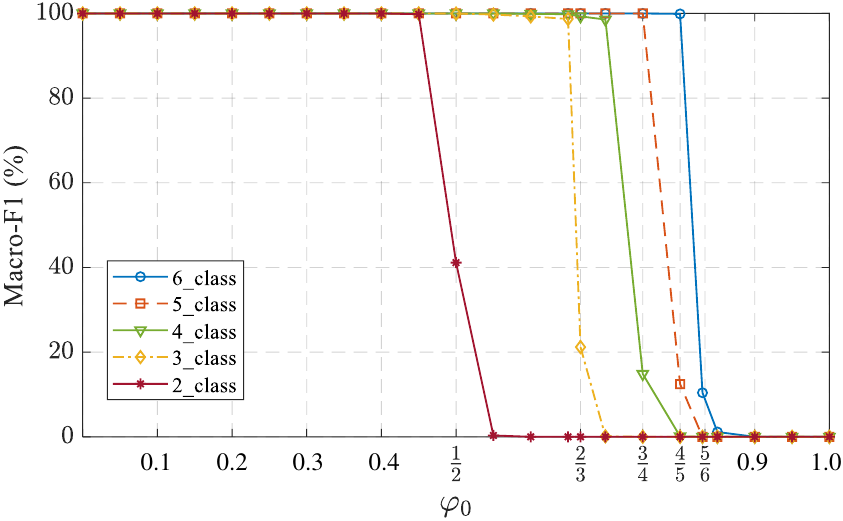}
	\vspace{-0.2cm}
	\caption{Effect of the number of target classes on the detection performance of the proposed method for data with constant uniform errors (Type~\ref{type 1}).}
	\label{fig:class_ablation}
	\vspace{-0.2cm}
\end{figure}

{\bf{Class-dependent error (Type~\ref{type 3}):}} Fig.~\ref{fig:class_data_comper} presents the robustness comparison of different methods under class-dependent label errors. Compared with methods such as Co-teaching and TCE, SL-RLF still maintains superior robustness in this setting. Since this experiment is a two-class detection task, the effective robustness range described in Theorem~\ref{theorem 3} can be simplified to $\varphi_{y_{\mathbf{x}_i}}<0.5$. As shown in Fig.~\ref{fig:class_data_comper}, the experimentally observed performance trend is generally consistent with the theoretical range given by Theorem~\ref{theorem 3}. Specifically, when either the label-error probability of the clutter class $\varphi_{\mathrm{c}}$ or that of the target class $\varphi_{\mathrm{t}}$ approaches the theoretical boundary of $0.5$, the performance of the detection model exhibits pronounced degradation, indicating that the reliability of the supervisory information is substantially weakened near this boundary.

\begin{figure}[t]
	\centering
	\includegraphics[width=0.48\textwidth, height=0.29\textwidth]{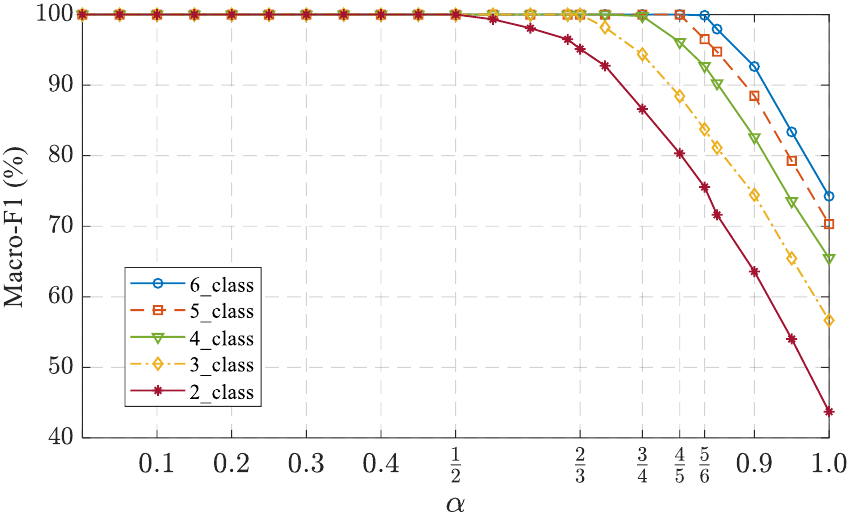}
	\vspace{-0.2cm}
	\caption{Effect of the number of target classes on the detection performance of the proposed method for data with sample-dependent uniform errors (Type~\ref{type 2}).}
	\label{fig:class_ablation_non_uniform}
	\vspace{-0.4cm}
\end{figure}

In addition, Fig.~\ref{fig:class_data_comper}(a) shows that when the label-error rate of the target class reaches $\varphi_{\mathrm{t}}=0.5$, the performance degradation of SL-RLF is less severe than that observed when the label-error rate of the clutter class reaches $\varphi_{\mathrm{c}}=0.5$. This phenomenon is mainly related to class imbalance in the training data. Since the number of clutter samples is significantly larger than that of target samples, when the labels of target samples are flipped with probability $0.5$, the samples affected by this high error probability are mainly from the relatively smaller target class. Consequently, the training set still contains a large number of correctly labeled clutter samples, which provide reliable supervisory information for model training. In contrast, when the label-error rate of the clutter class reaches $\varphi_{\mathrm{c}}=0.5$, the dominant clutter samples are affected by a high error rate, which more substantially degrades the overall quality of supervisory information and leads to a more pronounced performance decline.

\subsection{Ablation Experiments}
{\bf{Number of target classes:}} To analyze how the total number of classes $K$ distinguished by the detection model affects the robustness of SL-RLF, we construct training datasets with different class-number settings based on the LSS-FMCWR dataset and perturb the training labels according to the constant uniform error model defined in Type~\ref{type 1}. In the experiments, Macro-F1 is used to evaluate the detection performance of the model on multiclass target datasets. The results in Fig.~\ref{fig:class_ablation} show that SL-RLF maintains robustness under different class-number settings, and the admissible range of the error probability $\varphi_0$ over which robustness is maintained expands as the number of classes increases. This trend is consistent with the condition $\varphi_0 < {\left( K-1\right) }/{K}$ given in Theorem~\ref{theorem 1}, further validating the applicability of Theorem~\ref{theorem 1} to multiclass target detection.

To further investigate the effect of the number of classes on the robustness range under sample-dependent uniform errors, we construct partially mislabeled datasets according to Type~\ref{type 2}. The sample-dependent function $g(\mathbf{x})$ is set such that its output follows a uniform distribution. The experimental results are shown in Fig.~\ref{fig:class_ablation_non_uniform}. When $\alpha$ lies within the theoretical range specified by Theorem~\ref{theorem 2}, SL-RLF maintains stable detection performance. In contrast, when $\alpha$ exceeds the corresponding theoretical range, the detection performance decreases to varying degrees. These results indicate that the effect of the number of classes on the robustness interval of SL-RLF is consistent with the analysis in Theorem~\ref{theorem 2}.

Overall, these results confirm that the robustness of the SL-RLF constructed with $\ell_{\mathrm{ROB}}$ generalizes across different class-number settings and uniform label-error mechanisms. They further show that the effective robustness ranges of SL-RLF are well characterized by the theoretical conditions established in the corresponding theorems.

\begin{figure}[t]
	\centering
	\includegraphics[width=0.48\textwidth, height=0.30\textwidth]{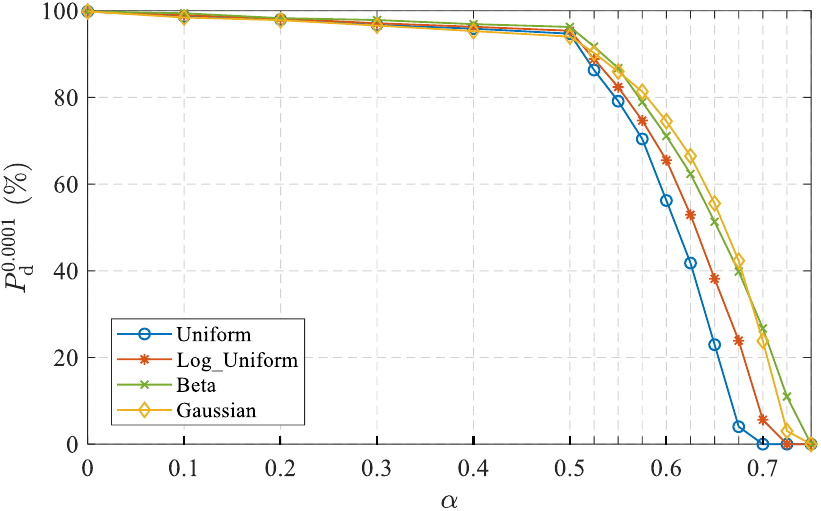}
	\vspace{-0.2cm}
	\caption{Detection performance of the proposed method under different distributions of the error-probability generation function $g$ for data with sample-dependent uniform errors (Type~\ref{type 2}).}
	\label{fig:g_ablation}
	\vspace{-0.4cm}
\end{figure}

{\bf{Type of $g$:}} According to the definition of Type~\ref{type 2}, under the sample-dependent uniform error type, the error probability $\varphi_{\mathbf{x}_i}$ of each sample is determined by the sample-dependent function $g(\mathbf{x}_i)\in[0,1]$. Theorem~\ref{theorem 2} indicates that, when the corresponding assumptions hold, the robustness guarantee for SL-RLF is independent of the specific form of $g$. To verify this theoretical result, we conduct ablation experiments using four sample-dependent functions $g$ whose outputs follow different distributions, including uniform, logarithmic, Beta, and truncated Gaussian distributions, as defined in Section~\ref{section:implementation_details}.

The results in Fig.~\ref{fig:g_ablation} show that, within the range $\alpha<0.5$ required by Theorem~\ref{theorem 2}, SL-RLF maintains stable robustness for different types of $g$. This result is consistent with the theoretical analysis, confirming that the robustness guarantee for SL-RLF does not depend on the specific form of $g$. Furthermore, when $\alpha>0.5$, the degree of performance degradation differs across different choices of $g$. This is because there exist samples that satisfy $\varphi_{\mathbf{x}_i}>0.5$ and thus more frequently provide misleading supervisory signals. Since different choices of $g$ induce different distributions of the error probability, the proportion of samples satisfying $\varphi_{\mathbf{x}_i}>0.5$ also varies, leading to different degrees of performance degradation.

\begin{table}[t]
	\renewcommand{\arraystretch}{1.3}
	\caption{MTDSP data under different polarization modes and sea state.}
	\label{MTDSP pol level}
	\centering
	\setlength{\tabcolsep}{1.2mm}{ 
		\vspace{-0.2cm}
		\begin{tabular}{cccc}
			\bottomrule
			\cmidrule{1-4} 
			{\bf{Mode}} & {\bf{Level}} & \multicolumn{2}{c}{{\bf{Data Name}}} \\
			\bottomrule
			\cmidrule{1-4}
			\multirow{8}{*}{HH} & \multirow{2}{*}{5} & 20221112180016\_stare\_HH & 20221112205109\_stare\_HH \\ 
			& & 20221112220105\_stare\_HH & 20221113030022\_stare\_HH\\
			\cmidrule{2-4} 
			& \multirow{2}{*}{4} & 20221112140039\_stare\_HH & 20221113090050\_stare\_HH \\ 
			& & 20221113170002\_stare\_HH & 20221114000105\_stare\_HH\\
			\cmidrule{2-4} 
			& \multirow{2}{*}{3} & 20221114180042\_stare\_HH & 20221114060027\_stare\_HH \\ 
			& & 20221114150028\_stare\_HH & 20221114050022\_stare\_HH\\
			\cmidrule{2-4} 
			& 2 & \multicolumn{2}{c}{See Table~\ref{MTDSP HH 2}}\\
			\bottomrule
			\cmidrule{1-4}
			\multirow{9}{*}{VV} & \multirow{2}{*}{5} & 20221112190032\_stare\_VV & 20221112210042\_stare\_VV \\ 
			& & 20221112220047\_stare\_VV & 20221113050014\_stare\_VV\\
			\cmidrule{2-4} 
			& \multirow{2}{*}{4} & 20221112171210\_stare\_VV & 20221113140029\_stare\_VV \\ 
			& & 20221113170004\_stare\_VV & 20221113230032\_stare\_VV\\
			\cmidrule{2-4} 
			& \multirow{2}{*}{3} & 20221114010041\_stare\_VV & 20221114040055\_stare\_VV \\ 
			& & 20221114060105\_stare\_VV & 20221114140023\_stare\_VV\\
			\cmidrule{2-4} 
			& \multirow{2}{*}{2} & 20221115040032\_stare\_VV & 20221115050036\_stare\_VV \\
			& & 20221115070046\_stare\_VV & 20221115080050\_stare\_VV\\
			\bottomrule
			\cmidrule{1-4}
	\end{tabular}}
	\vspace{-0.4cm}
\end{table}

\begin{figure}[t]
	\centering
	\includegraphics[width=0.48\textwidth, height=0.33\textwidth]{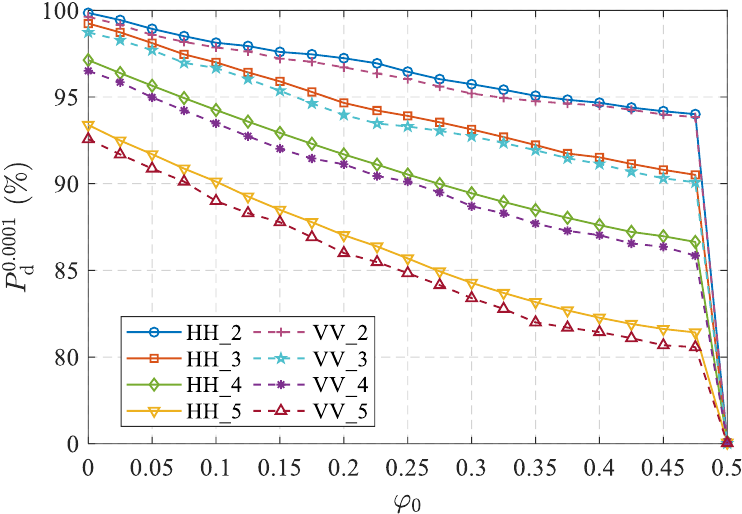}
	\vspace{-0.2cm}
	\caption{Detection performance of the proposed method under different sea-state levels and polarization modes.}
	\label{pol_level_ablation}
	\vspace{-0.4cm}
\end{figure}

{\bf{Sea state and polarization mode:}} To analyze the influence of sea state and radar polarization mode on the robustness of SL-RLF, we use the data in Table~\ref{MTDSP pol level} and construct datasets with constant uniform errors under different sea states and polarization modes according to the probabilistic transition model defined in Type~\ref{type 1}. The results in Fig.~\ref{pol_level_ablation} show that SL-RLF maintains favorable robustness on data corresponding to different polarization modes and sea-state conditions. Moreover, the ranges of error probability over which robustness is observed are consistent with the theoretical range specified in Theorem~\ref{theorem 1}.

Furthermore, as the sea state increases, the robustness of SL-RLF shows a decreasing trend. This phenomenon is mainly because the sea-clutter background becomes more complex under higher sea states, and target echoes are more likely to be submerged in strong sea clutter. As a result, the separability between targets and background is reduced, and the detection model faces greater difficulty in learning effective decision boundaries. In this case, the adverse effect of partial label errors on model parameter learning is further amplified, leading to more pronounced performance degradation.

\begin{figure}[t]
	\centering
	\includegraphics[width=0.48\textwidth, height=0.33\textwidth]{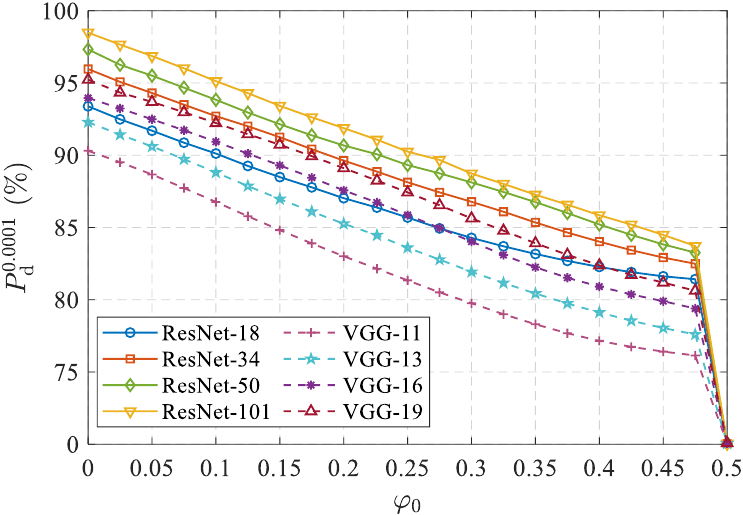}
	\vspace{-0.2cm}
	\caption{Detection performance of the proposed method under different model architectures. Here, “ResNet-18” indicates that the detection model uses the ResNet network architecture, with 18 denoting the number of network layers. A larger number corresponds to a deeper network architecture, typically involving more parameters and stronger detection capability.}
	\label{fig:model_ablation}
	\vspace{-0.4cm}
\end{figure}

{\bf{Model architecture:}} To validate the applicability of the proposed SL-RLF framework to different network architectures of the detection model, we construct two detection models separately based on the VGG- and ResNet-based architectures described in Section~\ref{section:Model_Architecture}. Based on the radar echo data under sea state 5 with HH polarization in Table~\ref{MTDSP pol level}, partially mislabeled training data are generated under the constant uniform error setting (Type~\ref{type 1}) for the ablation experiment. The corresponding results are shown in Fig.~\ref{fig:model_ablation}. The results indicate that, across different network architectures and model depths, training with the proposed SL-RLF framework improves the robustness of detection models against partial label errors.

\section{Conclusion} \label{section:Conclusion}
This paper investigated the training problem of deep-learning-based radar target detection models in the presence of partially mislabeled training data and proposed a robust learning framework to mitigate the detection performance degradation caused by label errors. To characterize label-error generation mechanisms, we first established corresponding probabilistic models. Based on these models, we analyzed, within a Bayes risk minimization formulation, the conditions under which the loss function in the learning framework enables robustness against label errors. Theoretical results showed that the symmetry condition of the loss function provides an important sufficient condition for robustness. Motivated by this result, we designed a symmetric robust loss function $\ell_{\text{ROB}}$ and constructed a robust learning framework, termed SL-RLF, for radar target detection with partially mislabeled training data. Under the corresponding label-error conditions, the optimal detection model learned by SL-RLF from partially mislabeled training data can achieve the same detection performance under the true data distribution as the optimal model learned from accurately labeled data.

Experimental results on measured radar echo datasets demonstrated that SL-RLF achieves stronger robustness than representative competing methods, including CE, MSE, Co-teaching, and TCE. Ablation experiments further verified the applicability of the proposed SL-RLF framework across different detection model architectures, sea states, polarization modes, and numbers of target classes. Moreover, the experimentally observed robustness trends were consistent with the theoretical analysis, supporting the validity of the theoretical results.

\bibliographystyle{IEEEtran}
\bibliography{IEEEabrv,myref}

\clearpage
\appendices 
\setcounter{theorem}{0} 
{
	\section{Proof of Theorem 1}
	\begin{theorem}
		Let $\ell$ be a symmetric loss function. Under the Type~1 error model, i.e., constant uniform error, if
		\begin{equation}
			\varphi_0 < \frac{K-1}{K},
		\end{equation}
		then $\ell$ is robust to data with this type of incorrect labels.
	\end{theorem}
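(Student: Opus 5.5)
The approach is the standard symmetric-loss argument of Ghosh et al.: express the noisy risk $\hat{R}_\ell$ as an affine function of the clean risk $R_\ell$ with a positive slope. Once we have that, any minimizer of $R_\ell$ also minimizes $\hat{R}_\ell$. By the remark after Definition~\ref{definition 1}, we can then take $\hat{f}_{\boldsymbol{\theta}^\ast}=f_{\boldsymbol{\theta}^\ast}$, and \eqref{eq:equal_detect} follows.

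\textbf{Step 1 (condition on the sample).} Fix $\mathbf{x}_i$ with true label $y=y_{\mathbf{x}_i}$. Take the expectation over the label-noise channel \eqref{eq:get_label} under Type~1:
\begin{equation}
\mathbb{E}_{\hat{y}\mid \mathbf{x}_i,y}\big[\ell(f_{\boldsymbol{\theta}}(\mathbf{x}_i),\mathbf{e}_{\hat{y}})\big]=(1-\varphi_0)\,\ell(f_{\boldsymbol{\theta}}(\mathbf{x}_i),\mathbf{e}_{y})+\frac{\varphi_0}{K-1}\sum_{j\neq y}\ell(f_{\boldsymbol{\theta}}(\mathbf{x}_i),\mathbf{e}_{j}).
\end{equation}

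\textbf{Step 2 (use symmetry).} By Definition~\ref{definition 2}, the sum over $j\neq y$ equals $C-\ell(f_{\boldsymbol{\theta}}(\mathbf{x}_i),\mathbf{e}_y)$. Substituting, the conditional expectation becomes
\begin{equation}
\Big(1-\frac{K\varphi_0}{K-1}\Big)\ell(f_{\boldsymbol{\theta}}(\mathbf{x}_i),\mathbf{e}_y)+\frac{C\varphi_0}{K-1}.
\end{equation}
Taking the outer expectation over $(\mathbf{x}_i,y_{\mathbf{x}_i})\sim\mathcal{P}$ gives
\begin{equation}
\hat{R}_\ell(f_{\boldsymbol{\theta}})=\Big(1-\frac{K\varphi_0}{K-1}\Big)R_\ell(f_{\boldsymbol{\theta}})+\frac{C\varphi_0}{K-1}
\end{equation}
for every $\boldsymbol{\theta}$. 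This step relies on the noise being independent of $\mathbf{x}_i$ and $y$ beyond the stated channel, i.e.\ the tower property over the joint law of $(\mathbf{x}_i,y_{\mathbf{x}_i},\hat{y}_{\mathbf{x}_i})$.

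\textbf{Step 3 (compare risks).} For any $\boldsymbol{\theta}$, the additive constant cancels:
\begin{equation}
\hat{R}_\ell(f_{\boldsymbol{\theta}^\ast})-\hat{R}_\ell(f_{\boldsymbol{\theta}})=\Big(1-\frac{K\varphi_0}{K-1}\Big)\big(R_\ell(f_{\boldsymbol{\theta}^\ast})-R_\ell(f_{\boldsymbol{\theta}})\big).
\end{equation}
The hypothesis $\varphi_0<(K-1)/K$ makes the prefactor strictly positive. Optimality of $f_{\boldsymbol{\theta}^\ast}$ for $R_\ell$ makes the bracket nonpositive, so the difference is $\le 0$. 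Hence $f_{\boldsymbol{\theta}^\ast}$ minimizes $\hat{R}_\ell$, we may take $\hat{f}_{\boldsymbol{\theta}^\ast}=f_{\boldsymbol{\theta}^\ast}$, and the two detection probabilities in \eqref{eq:equal_detect} coincide.

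\textbf{Main obstacle.} The algebra is routine. The delicate point is interpreting ``the'' optimal solution $\hat{f}_{\boldsymbol{\theta}^\ast}$ when $\hat{R}_\ell$ has several minimizers. Because the prefactor is strictly positive, the affine relation shows that the minimizer sets of $R_\ell$ and $\hat{R}_\ell$ are identical, not merely that one contains the other. So any choice of $\hat{f}_{\boldsymbol{\theta}^\ast}$ is also a clean-risk minimizer. I would state this explicitly, and note that robustness is then understood relative to a matching selection of minimizers, which is the reading implied by the remark after Definition~\ref{definition 1}.
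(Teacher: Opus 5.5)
Your proposal is correct and follows the paper's proof: you expand by the tower property, substitute the symmetry identity to get $\hat{R}_\ell(f_{\boldsymbol{\theta}})=\bigl(1-\frac{K\varphi_0}{K-1}\bigr)R_\ell(f_{\boldsymbol{\theta}})+\frac{C\varphi_0}{K-1}$, and use the strictly positive prefactor to show that $f_{\boldsymbol{\theta}^\ast}$ also minimizes $\hat{R}_\ell$, which is exactly the paper's chain of steps. Your added remark that the positive slope makes the two minimizer sets identical, rather than merely containing $f_{\boldsymbol{\theta}^\ast}$, slightly sharpens the paper's argument and makes explicit the selection of $\hat{f}_{\boldsymbol{\theta}^\ast}=f_{\boldsymbol{\theta}^\ast}$ that the paper leaves implicit.
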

	
	\textbf{Proof:} For a dataset $\hat{\mathcal{D}}$ satisfying the constant uniform error model, the error probability $\varphi_{\mathbf{x}}$ is the same constant $\varphi_0$ for all samples. Therefore, by the law of total expectation, $\hat{R}_\ell(f_{\boldsymbol{\theta}})$ can be expanded as
	\begin{align}
		&\hat{R}_\ell(f_{\boldsymbol{\theta}}) = \mathbb{E}_{\mathbf{x}, \hat{y}_{\mathbf{x}}} \left[ \ell(f_{\boldsymbol{\theta}}\left( \mathbf{x}\right), \mathbf{e}_{\hat{y}_{\mathbf{x}}})\right]  \nonumber\\
		&= \mathbb{E}_{\mathbf{x}, y_{\mathbf{x}} }\left[  \mathbb{E}_{\hat{y}_{\mathbf{x}} | \mathbf{x}, y_{\mathbf{x}} }\left[ \ell\left( f_{\boldsymbol{\theta}}\left( \mathbf{x}\right), \mathbf{e}_{\hat{y}_{\mathbf{x}}}\right) \right] \right]  \label{eq:proof1_1}\\
		&=  \mathbb{E}_{\mathbf{x} , {y}_{\mathbf{x}}} \!\!\left[(1 \!-\! \varphi_0) \ell\left( f_{\boldsymbol{\theta}}\left( \mathbf{x}\right), \mathbf{e}_{y_{\mathbf{x}}}\right)  +\! \frac{\varphi_0}{K\!-\!1} \!\sum_{j \neq y_{\mathbf{x}}} \!\!\ell\!\left( f_{\boldsymbol{\theta}}\left( \mathbf{x}\right), \mathbf{e}_j\right) \right]\!.\nonumber
	\end{align}
	Since $\ell$ is a symmetric loss function, it follows from Definition~2 that
	\begin{equation}
		\sum_{k=0}^{K} \ell(f_{\boldsymbol{\theta}}\left( \mathbf{x}\right), \mathbf{e}_k) = C, \quad \forall \mathbf{x} \in \mathcal{X}.
		\label{eq:symmetry_loss}
	\end{equation}
	Hence, for any $(\mathbf{x}, y_{\mathbf{x}})$,
	\begin{equation}
		\sum_{j\neq y_{\mathbf{x}}}\ell\left(f_{\boldsymbol{\theta}}(\mathbf{x}), \mathbf{e}_j\right)
		= C-\ell\left(f_{\boldsymbol{\theta}}(\mathbf{x}), \mathbf{e}_{y_{\mathbf{x}}}\right).
	\end{equation}
	Substituting this relation into \eqref{eq:proof1_1} yields
	\begin{equation}
		\begin{aligned}
			\hat{R}_\ell(f_{\boldsymbol{\theta}}) &= (1 - \varphi_0) R_\ell(f_{\boldsymbol{\theta}}) + 		\frac{\varphi_0}{K-1} (C - R_\ell(f_{\boldsymbol{\theta}})) \\
			&= \frac{C \varphi_0}{K-1} + \left(1 - \frac{\varphi_0 K}{K-1}\right) 	R_\ell(f_{\boldsymbol{\theta}}).
		\end{aligned}
	\end{equation}
	Let $f_{\boldsymbol{\theta}^\ast}$ be a model attaining the global minimum of $R_\ell(f_{\boldsymbol{\theta}})$. Then,
	\begin{equation}
		R_\ell(f_{\boldsymbol{\theta}^\ast}) \leq R_\ell(f_{\boldsymbol{\theta}}), \quad \forall f_{\boldsymbol{\theta}}.
	\end{equation}
	Furthermore, from the condition $\varphi_0 < \left( {K-1}\right) /{K}$, we have
	\begin{equation}
		\left(1 - \frac{\varphi_0 K}{K-1}\right) > 0 .
	\end{equation}
	Therefore, for any $f_{\boldsymbol{\theta}}$,
	\begin{equation}
		\hspace{-0.1cm}
		\hat{R}_\ell(f_{\boldsymbol{\theta}^\ast}) \!-\! \hat{R}_\ell(f_{\boldsymbol{\theta}})\!=\!\left(1 \!-\! \frac{\varphi_0 K}{K\!-\!1}\right) \!\!\left( R_\ell(f_{\boldsymbol{\theta}^\ast}) \!-\! R_\ell(f_{\boldsymbol{\theta}}) \right)\!\leq\!0.
		\label{eq:leq0}
	\end{equation}
	This follows that, under the constant uniform error type, an optimal solution of ${R}_\ell(f_{\boldsymbol{\theta}})$ is also an optimal solution of $\hat{R}_\ell(f_{\boldsymbol{\theta}})$. If
	\begin{equation}
		\hat{f}_{\boldsymbol{\theta}^\ast}= \arg\min_{{\boldsymbol{\theta}}}\hat{R}_\ell(f_{\boldsymbol{\theta}}),
	\end{equation}
	Accordingly, the optimal model under incorrect labels can be represented as $\hat{f}_{\boldsymbol{\theta}^\ast}=f_{\boldsymbol{\theta}^\ast}$, and consequently
	\begin{equation}
		\hat{R}_\ell(f_{\boldsymbol{\theta}^\ast}) = \hat{R}_\ell(\hat{f}_{\boldsymbol{\theta}^\ast}).
	\end{equation} 
	Thus, the robustness condition in Definition~1 is satisfied. Hence, when $\varphi_0 < \left( {K-1}\right) /{K}$, the symmetric loss function is robust to this type of incorrect labels.
	
	It is worth noting that the boundary case $\varphi_0 = \left( {K-1}\right) /{K}$ requires separate discussion. Although in this case $\hat{R}_\ell(f_{\boldsymbol{\theta}^\ast})-\hat{R}_\ell(f_{\boldsymbol{\theta}}) = 0$, so that \eqref{eq:leq0} still holds, this does not imply that the symmetric loss function remains resistant to incorrect labels at the boundary. In fact, the total risk reduces to
	\begin{equation}
		\hat{R}_\ell(f_{\boldsymbol{\theta}}) = \frac{C \varphi_0}{K-1} = \frac{C}{K}
	\end{equation} 
	Thus, $\hat{R}_\ell(f_{\boldsymbol{\theta}})$ degenerates into a constant independent of the model $f_{\boldsymbol{\theta}}$. In other words, any model can attain the same minimum total risk, including both models with correct discriminative capability and models without effective discriminative capability. Therefore, when $\varphi_0 = \left( {K-1}\right) /{K}$, minimizing $\hat{R}_\ell$ can no longer guarantee recovery of the optimal decision boundary obtained under accurate labels. Consequently, the loss function cannot be regarded as robust to incorrect labels under this boundary condition.
	
	This completes the proof. $\hfill\blacksquare$

	\section{Proof of Theorem 2}
	\begin{theorem}
		Let $\ell$ be a symmetric loss function. Under the Type~2 error model, namely, sample-dependent uniform error, if
		\begin{equation}
			\varphi_{\mathbf{x}} < \frac{K-1}{K}, \quad \forall \mathbf{x}\in\mathcal{X},
		\end{equation}
		and $R_\ell(f_{\boldsymbol{\theta}^\ast}) = 0$, then $\ell$ is robust to incorrect labels generated by this error model.
	\end{theorem}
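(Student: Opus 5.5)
We follow the proof of Theorem~1, but now the error probability $\varphi_{\mathbf{x}}$ depends on the sample. By the law of total expectation, conditioning on $(\mathbf{x},y_{\mathbf{x}})$ and using the Type~2 transition probabilities $\varphi_{\mathbf{x},j}=\varphi_{\mathbf{x}}/(K-1)$, we write
\begin{equation*}
\hat{R}_\ell(f_{\boldsymbol{\theta}})=\mathbb{E}_{\mathbf{x},y_{\mathbf{x}}}\!\left[(1-\varphi_{\mathbf{x}})\,\ell(f_{\boldsymbol{\theta}}(\mathbf{x}),\mathbf{e}_{y_{\mathbf{x}}})+\frac{\varphi_{\mathbf{x}}}{K-1}\sum_{j\neq y_{\mathbf{x}}}\ell(f_{\boldsymbol{\theta}}(\mathbf{x}),\mathbf{e}_j)\right].
\end{equation*}
The symmetry condition of Definition~2 replaces the inner sum by $C-\ell(f_{\boldsymbol{\theta}}(\mathbf{x}),\mathbf{e}_{y_{\mathbf{x}}})$. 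This gives
\begin{equation*}
\hat{R}_\ell(f_{\boldsymbol{\theta}})=\mathbb{E}_{\mathbf{x}}\!\left[\frac{C\varphi_{\mathbf{x}}}{K-1}\right]+\mathbb{E}_{\mathbf{x},y_{\mathbf{x}}}\!\left[\Big(1-\frac{K\varphi_{\mathbf{x}}}{K-1}\Big)\ell(f_{\boldsymbol{\theta}}(\mathbf{x}),\mathbf{e}_{y_{\mathbf{x}}})\right].
\end{equation*}
The first term does not depend on $\boldsymbol{\theta}$. The difficulty is that the weight $w(\mathbf{x})=1-K\varphi_{\mathbf{x}}/(K-1)$ now sits inside the expectation, so $\hat{R}_\ell$ is no longer an affine function of $R_\ell$, and the argument of Theorem~1 does not carry over.

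This is the main obstacle, and the zero-risk assumption is what resolves it. Subtracting the two risks, the constant term cancels:
\begin{equation*}
\hat{R}_\ell(f_{\boldsymbol{\theta}^\ast})-\hat{R}_\ell(f_{\boldsymbol{\theta}})=\mathbb{E}_{\mathbf{x},y_{\mathbf{x}}}\!\left[w(\mathbf{x})\big(\ell(f_{\boldsymbol{\theta}^\ast}(\mathbf{x}),\mathbf{e}_{y_{\mathbf{x}}})-\ell(f_{\boldsymbol{\theta}}(\mathbf{x}),\mathbf{e}_{y_{\mathbf{x}}})\big)\right].
\end{equation*}
Because $\ell\ge 0$ and $R_\ell(f_{\boldsymbol{\theta}^\ast})=0$, we have $\ell(f_{\boldsymbol{\theta}^\ast}(\mathbf{x}),\mathbf{e}_{y_{\mathbf{x}}})=0$ almost surely. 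The difference therefore reduces to $-\mathbb{E}[w(\mathbf{x})\,\ell(f_{\boldsymbol{\theta}}(\mathbf{x}),\mathbf{e}_{y_{\mathbf{x}}})]$. The pointwise hypothesis $\varphi_{\mathbf{x}}<(K-1)/K$ gives $w(\mathbf{x})>0$ for every $\mathbf{x}$. Combined with $\ell\ge 0$, the integrand is pointwise nonnegative, so the difference is $\le 0$ for every $\boldsymbol{\theta}$.

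Hence $f_{\boldsymbol{\theta}^\ast}$ is a global minimizer of $\hat{R}_\ell$, and we may take $\hat{f}_{\boldsymbol{\theta}^\ast}=f_{\boldsymbol{\theta}^\ast}$. Definition~1 then follows exactly as in Theorem~1.

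As a supplementary remark, we can also show that any other minimizer $\hat f$ of $\hat{R}_\ell$ is essentially equivalent. Such an $\hat f$ must satisfy $\mathbb{E}[w\,\ell(\hat f(\mathbf{x}),\mathbf{e}_{y_{\mathbf{x}}})]=0$. Since $w>0$, this forces $\ell(\hat f(\mathbf{x}),\mathbf{e}_{y_{\mathbf{x}}})=0$ almost surely, so $R_\ell(\hat f)=0$ and $\hat f$ is also optimal under clean labels. For $\ell_{\mathrm{ROB}}$ this means $[\hat f(\mathbf{x})]_{y_{\mathbf{x}}}=1$ almost surely, which gives identical decisions. Finally, we would note why the boundary is excluded: if $w$ is allowed to vanish on a set of positive measure, $\hat{R}_\ell$ no longer constrains the model there.
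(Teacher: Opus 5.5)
Your proof is correct and follows the paper's argument essentially line by line: you expand $\hat{R}_\ell$ by total expectation, use symmetry to split off the $\boldsymbol{\theta}$-independent term $\mathbb{E}[C\varphi_{\mathbf{x}}/(K-1)]$ and the pointwise-positive weight $1-K\varphi_{\mathbf{x}}/(K-1)$, and then use $R_\ell(f_{\boldsymbol{\theta}^\ast})=0$ to eliminate the $f_{\boldsymbol{\theta}^\ast}$ term. Your supplementary remark, that every minimizer of $\hat{R}_\ell$ has zero clean risk and therefore, for $\ell_{\mathrm{ROB}}$, makes correct decisions almost surely, goes slightly beyond the paper, which simply takes $\hat{f}_{\boldsymbol{\theta}^\ast}=f_{\boldsymbol{\theta}^\ast}$; it correctly closes the non-uniqueness loophole in Definition~1.
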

	
	\textbf{Proof:} For any $ f_{\boldsymbol{\theta}} $, the population risk associated with the dataset $\hat{\mathcal{D}}$ under the sample-dependent uniform error model can be expanded by the law of total expectation as
	\begin{align}
		&\hat{R}_\ell(f_{\boldsymbol{\theta}})  = \mathbb{E}_{\mathbf{x}, \hat{y}_{\mathbf{x}}} \left[ \ell(f_{\boldsymbol{\theta}}\left( \mathbf{x}\right), \mathbf{e}_{\hat{y}_{\mathbf{x}}})\right] \nonumber \\
		&=  \mathbb{E}_{\mathbf{x},y_{\mathbf{x}}} \left[ \mathbb{E}_{\hat{y}_{\mathbf{x}} | \mathbf{x}, y_{\mathbf{x}}}\left[  \ell(f_{\boldsymbol{\theta}}\left( \mathbf{x}\right), \mathbf{e}_{\hat{y}_{\mathbf{x}}}) \right]  \right] \label{eq:expectation_fully_unfolded}\\
		&= \mathbb{E}_{\mathbf{x}, y_{\mathbf{x}}} \!\left[(1 - \varphi_{\mathbf{x}}) \ell(f_{\boldsymbol{\theta}}\left( \mathbf{x}\right), \mathbf{e}_{y_{\mathbf{x}}}) \!+\!\! \sum_{j \neq y_{\mathbf{x}}}\! \frac{\varphi_{\mathbf{x}} \ell(f_{\boldsymbol{\theta}}\left( \mathbf{x}\right), \mathbf{e}_j)}{K-1}\right].\nonumber
	\end{align} 
	Using \eqref{eq:symmetry_loss}, which follows from the symmetry condition in Definition~2, we obtain
	\begin{equation} 
		\sum_{j \neq y_{\mathbf{x}}} \frac{\varphi_{\mathbf{x}} \ell(f_{\boldsymbol{\theta}}\left( 	\mathbf{x}\right), \mathbf{e}_j)}{K-1} =  \frac{ \varphi_{\mathbf{x}} \left( C - \ell\left( f_{\boldsymbol{\theta}}\left( \mathbf{x}\right), \mathbf{e}_{y_{\mathbf{x}}}\right) \right)}{K-1}.
	\end{equation}
	Substituting this relation into \eqref{eq:expectation_fully_unfolded} yields
	\begin{align}
		&\hat{R}_\ell(f_{\boldsymbol{\theta}}) = \mathbb{E}_{\mathbf{x}, y_{\mathbf{x}}} \left[(1 - 		\varphi_{\mathbf{x}}) \ell(f_{\boldsymbol{\theta}}\left( \mathbf{x}\right), \mathbf{e}_{y_{\mathbf{x}}}) \right] \nonumber \\ 
		&\qquad\quad+ \mathbb{E}_{\mathbf{x}, y_{\mathbf{x}}} \left[\frac{ \varphi_{\mathbf{x}}\left( C - 	\ell\left( 	f_{\boldsymbol{\theta}}\left( \mathbf{x}\right), \mathbf{e}_{y_{\mathbf{x}}}\right) \right)}{K-1}  \right]\\
		&= \mathbb{E}_{\mathbf{x}, y_{\mathbf{x}}}\left[  \frac{C\varphi_{\mathbf{x}}}{K-1} \right]  + 	\mathbb{E}_{\mathbf{x}, y_{\mathbf{x}}} \left[ \left(1 - \frac{K \varphi_{\mathbf{x}}}{K-1}\right) \ell(f_{\boldsymbol{\theta}}\left( \mathbf{x}\right), \mathbf{e}_{y_{\mathbf{x}}})\right].\nonumber
	\end{align}
	By the condition $R_\ell(f_{\boldsymbol{\theta}^\ast}) = 0$ and the nonnegativity of the loss function, we have
	\begin{align}
		\ell(f_{\boldsymbol{\theta}}\left( \mathbf{x}\right), \mathbf{e}_{y_{\mathbf{x}}}) \geq 0, \quad 	\forall \mathbf{x} \in \mathcal{X}, \label{eq:geq_0}\\
		\ell(f_{\boldsymbol{\theta}^\ast}(\mathbf{x}), \mathbf{e}_{y_{\mathbf{x}}}) = 0, \quad \forall 	\mathbf{x} \in \mathcal{X}.
	\end{align}
	Therefore, for any $ f_{\boldsymbol{\theta}} $, we have
	\begin{align}
		&\hat{R}_\ell(f_{\boldsymbol{\theta}^\ast}) - \hat{R}_\ell(f_{\boldsymbol{\theta}}) \nonumber\\
		&= \mathbb{E}_{\mathbf{x}, y_{\mathbf{x}}}\left[  \frac{C\varphi_{\mathbf{x}}}{K-1} \right]  + 	\mathbb{E}_{\mathbf{x}, y_{\mathbf{x}}} \left[ \left(1 - \frac{K \varphi_{\mathbf{x}}}{K-1}\right) \ell(f_{\boldsymbol{\theta}^\ast}\left( \mathbf{x}\right), \mathbf{e}_{y_{\mathbf{x}}})\right]\nonumber\\
		&- \mathbb{E}_{\mathbf{x}, y_{\mathbf{x}}}\left[  \frac{C\varphi_{\mathbf{x}}}{K-1} \right]  + 	\mathbb{E}_{\mathbf{x}, y_{\mathbf{x}}} \left[ \left(1 - \frac{K \varphi_{\mathbf{x}}}{K-1}\right) \ell(f_{\boldsymbol{\theta}}\left( \mathbf{x}\right), \mathbf{e}_{y_{\mathbf{x}}})\right]\nonumber\\
		&= \mathbb{E}_{\mathbf{x}, y_{\mathbf{x}}} \left[\left( 1 - \frac{K \varphi_{\mathbf{x}}}{K-1}\right)  \left( \ell\left( f_{\boldsymbol{\theta}^\ast}(\mathbf{x}), y_{\mathbf{x}}\right)  -\ell\left( f_{\boldsymbol{\theta}}\left( \mathbf{x}\right), \mathbf{e}_{y_{\mathbf{x}}}\right) \right) \right]\nonumber\\
		&=\mathbb{E}_{\mathbf{x}, y_{\mathbf{x}}} \left[\left( 1 - \frac{K \varphi_{\mathbf{x}}}{K-1}\right)  \left( 0 -\ell\left( f_{\boldsymbol{\theta}}\left( \mathbf{x}\right), \mathbf{e}_{y_{\mathbf{x}}}\right) \right) \right].
	\end{align}
	Since $\varphi_{\mathbf{x}} < \left( {K-1}\right) /{K}$ for all $\mathbf{x} \in \mathcal{X}$, it follows that
	\begin{equation}
		\left( 1 - \frac{K \varphi_{\mathbf{x}}}{K-1}\right)  > 0, \quad \forall \mathbf{x} \in \mathcal{X}.
	\end{equation}
	Together with the nonnegativity in \eqref{eq:geq_0}, this gives
	\begin{equation}
		\hat{R}_\ell(f_{\boldsymbol{\theta}}^\ast) - \hat{R}_\ell(f_{\boldsymbol{\theta}}) \leq 0, \quad \forall f_{\boldsymbol{\theta}}.
	\end{equation}
	Thus, $f_{\boldsymbol{\theta}^\ast}$ is also an optimal solution of the incorrect-label risk $\hat{R}_\ell(f_{\boldsymbol{\theta}})$. Consequently, the robustness condition in Definition~1 is satisfied, and the symmetric loss function $\ell$ is robust to this type of incorrect labels under the stated conditions. 
	
	This completes the proof. $\hfill\blacksquare$

	\section{Proof of Theorem 3}
	\begin{theorem}
		Let $\ell$ be a symmetric loss function. Under the class-dependent error model (Type~3 error), if, for any class $k$ and any sample $\mathbf{x} \in \mathcal{X}$,
		\begin{equation}
			0 \leq \ell\big(f_{\boldsymbol{\theta}}(\mathbf{x}), \mathbf{e}_k\big) \leq \frac{C}{K-1},
		\end{equation}
		Further suppose that $R_\ell(f_{\boldsymbol{\theta}^\ast}) = 0$ and that, 
		for any true class $y \in \mathcal{Y}$ and any incorrect class 
		$j \neq y$,
		\begin{equation}
			\varphi_{y_{\mathbf{x}},j} < 1 - \varphi_{y_{\mathbf{x}}} = 1- \sum_{j \neq 	y_{\mathbf{x}}} {\varphi}_{y_{\mathbf{x}}, j},
		\end{equation}
		then $\ell$ is robust to data with this type of incorrect labels.
	\end{theorem}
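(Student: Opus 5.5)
I will follow the same template as the proofs of Theorems 1 and 2: expand the noisy risk via the law of total expectation and show that $\hat{R}_\ell(f_{\boldsymbol{\theta}^\ast})-\hat{R}_\ell(f_{\boldsymbol{\theta}})\le 0$ for every $f_{\boldsymbol{\theta}}$. This lets us take $\hat{f}_{\boldsymbol{\theta}^\ast}=f_{\boldsymbol{\theta}^\ast}$, so Definition~1 holds.

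Under the class-dependent model, conditioning on $(\mathbf{x},y_{\mathbf{x}})$ gives
\begin{equation*}
\hat{R}_\ell(f_{\boldsymbol{\theta}})=\mathbb{E}_{\mathbf{x},y_{\mathbf{x}}}\Big[(1-\varphi_{y_{\mathbf{x}}})\,\ell(f_{\boldsymbol{\theta}}(\mathbf{x}),\mathbf{e}_{y_{\mathbf{x}}})+\sum_{j\neq y_{\mathbf{x}}}\varphi_{y_{\mathbf{x}},j}\,\ell(f_{\boldsymbol{\theta}}(\mathbf{x}),\mathbf{e}_j)\Big].
\end{equation*}
Unlike Types~1 and~2, the weights $\varphi_{y_{\mathbf{x}},j}$ are not uniform over $j$. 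The symmetry condition therefore cannot collapse the inner sum into $C-\ell(\cdot,\mathbf{e}_{y_{\mathbf{x}}})$ directly. Instead I will use $R_\ell(f_{\boldsymbol{\theta}^\ast})=0$ together with nonnegativity of $\ell$. These give $\ell(f_{\boldsymbol{\theta}^\ast}(\mathbf{x}),\mathbf{e}_{y_{\mathbf{x}}})=0$ almost surely. By symmetry we then have $\sum_{j\neq y_{\mathbf{x}}}\ell(f_{\boldsymbol{\theta}^\ast}(\mathbf{x}),\mathbf{e}_j)=C$. Each of these $K-1$ terms is at most $C/(K-1)$ by the boundedness hypothesis, so all of them must equal exactly $C/(K-1)$. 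Hence
\begin{equation*}
\hat{R}_\ell(f_{\boldsymbol{\theta}^\ast})=\mathbb{E}_{\mathbf{x},y_{\mathbf{x}}}\Big[\frac{C\,\varphi_{y_{\mathbf{x}}}}{K-1}\Big].
\end{equation*}

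For an arbitrary $f_{\boldsymbol{\theta}}$, I fix $(\mathbf{x},y)$ and write $a=\ell(f_{\boldsymbol{\theta}}(\mathbf{x}),\mathbf{e}_y)$ and $b_j=\ell(f_{\boldsymbol{\theta}}(\mathbf{x}),\mathbf{e}_j)$ for $j\neq y$. These satisfy $a+\sum_j b_j=C$ and $0\le b_j\le C/(K-1)$. Setting $d_j=C/(K-1)-b_j\ge 0$, we get $\sum_j d_j=C-\sum_j b_j=a$. The pointwise difference between the integrands for $f_{\boldsymbol{\theta}}$ and for $f_{\boldsymbol{\theta}^\ast}$ is then
\begin{equation*}
(1-\varphi_y)\,a-\sum_{j\neq y}\varphi_{y,j}\,d_j=\sum_{j\neq y}\big(1-\varphi_y-\varphi_{y,j}\big)\,d_j\;\ge 0.
\end{equation*}
The inequality uses the diagonal-dominance condition $\varphi_{y,j}<1-\varphi_y$ together with $d_j\ge0$. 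Taking expectations yields $\hat{R}_\ell(f_{\boldsymbol{\theta}})\ge\hat{R}_\ell(f_{\boldsymbol{\theta}^\ast})$, which completes the argument.

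The main obstacle is this last pointwise comparison. The key is to redistribute $a$ as $\sum_j d_j$ using symmetry, so that each nonuniform weight $\varphi_{y,j}$ is paired with its own nonnegative slack $d_j$. The boundedness hypothesis is exactly what guarantees $d_j\ge0$, and it is also what pins down the off-diagonal losses of $f_{\boldsymbol{\theta}^\ast}$. I will also remark, as in Theorem~2, that the zero-risk identity holds only almost surely. This suffices because all comparisons are made inside expectations.
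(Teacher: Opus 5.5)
Your proposal is correct and follows essentially the same route as the paper. The paper likewise uses symmetry to rewrite the noisy risk as $C\,\mathbb{E}[1-\varphi_{y_{\mathbf{x}}}]-\mathbb{E}\big[\sum_{j\neq y_{\mathbf{x}}}\xi_{y_{\mathbf{x}},j}\,\ell(f_{\boldsymbol{\theta}}(\mathbf{x}),\mathbf{e}_j)\big]$ with $\xi_{y,j}=1-\varphi_y-\varphi_{y,j}$. It then pins $\ell(f_{\boldsymbol{\theta}^\ast}(\mathbf{x}),\mathbf{e}_j)=C/(K-1)$ via zero risk, symmetry and boundedness, and concludes from $\xi_{y,j}>0$. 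Your slack variables $d_j$ make explicit the same pairing of each $\xi_{y,j}$ with the nonnegative gap $C/(K-1)-\ell(f_{\boldsymbol{\theta}}(\mathbf{x}),\mathbf{e}_j)$.
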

	
	\textbf{Proof:} Under the class-dependent error model, for any model $f_{\boldsymbol{\theta}}$, the corresponding population risk $\hat{R}_\ell(f_{\boldsymbol{\theta}})$ can be expanded by the law of total expectation. Using the symmetry property of the loss function, we obtain
	\begin{align}
		&\hat{R}_\ell(f_{\boldsymbol{\theta}})=  \mathbb{E}_{\mathbf{x},y_{\mathbf{x}}} \left[ \mathbb{E}_{\hat{y}_{\mathbf{x}} | \mathbf{x}, y_{\mathbf{x}}}\left[  \ell(f_{\boldsymbol{\theta}}\left( \mathbf{x}\right), \mathbf{e}_{\hat{y}_{\mathbf{x}}}) \right]  \right] \nonumber \\
		&= \mathbb{E}_{\mathbf{x}, y_{\mathbf{x}}} \left[ (1 - \varphi_{y_{\mathbf{x}}}) \ell(f_{\boldsymbol{\theta}}\left( \mathbf{x}\right), \mathbf{e}_{y_{\mathbf{x}}})  +  \sum_{j \neq y_{\mathbf{x}}} {\varphi}_{y_{\mathbf{x}}, j} \ell(f_{\boldsymbol{\theta}}(x), \mathbf{e}_j)\right] \nonumber \\
		&= \mathbb{E}_{\mathbf{x}, y_{\mathbf{x}}}\left[  (1 - \varphi_{y_{\mathbf{x}}}) \left( C - \sum_{j \neq y_{\mathbf{x}}} \ell\left( f_{\boldsymbol{\theta}}\left( \mathbf{x}\right) , \mathbf{e}_j\right) \right) \right]  \label{eq:class_conditional_rewrite}\\
		& \qquad + \mathbb{E}_{\mathbf{x}, y_{\mathbf{x}}} \left[ \sum_{j \neq y_{\mathbf{x}}} {\varphi}_{y_{\mathbf{x}}, j} \ell(f_{\boldsymbol{\theta}}\left( \mathbf{x}\right), \mathbf{e}_j)\right]\nonumber \\
		&= C \mathbb{E}_{\mathbf{x}, y_{\mathbf{x}}} \left[ 1 - \varphi_{y_{\mathbf{x}}}\right]  - \mathbb{E}_{\mathbf{x}, y_{\mathbf{x}}}\left[  \sum_{j \neq y_{\mathbf{x}}} \xi_{y_{\mathbf{x}}, j} \ell(f_{\boldsymbol{\theta}}\left( \mathbf{x}\right), \mathbf{e}_j)\right],\nonumber 
	\end{align}
	where $\xi_{y_{\mathbf{x}}, j} = 1 - \varphi_{y_{\mathbf{x}}} - {\varphi}_{y_{\mathbf{x}}, j}$. By the condition $\varphi_{y_\mathbf{x},j} < 1 - \varphi_{y_\mathbf{x}}$, we have
	\begin{equation}
		\xi_{y_\mathbf{x},j} > 0, \quad \forall j \neq y_\mathbf{x},\ \forall \mathbf{x} \in \mathcal{X}.
	\end{equation}
	
	Next, since $R_\ell(f_{\boldsymbol{\theta}^\ast}) = 0$ and the loss function is nonnegative, it follows that, for correctly labeled sample-label pairs $(\mathbf{x}, y_{\mathbf{x}})$,
	\begin{equation}
		\ell(f_{\boldsymbol{\theta}^\ast}(\mathbf{x}), \mathbf{e}_{y_{\mathbf{x}}}) = 0,
		\quad \mathrm{a.s.};(\mathbf{x},y_{\mathbf{x}})\sim\mathcal{P}.
	\end{equation}
	Combining this with the symmetry condition of $\ell$ yields
	\begin{equation}
		\sum_{j\neq y_{\mathbf{x}}}\ell\left(f_{\boldsymbol{\theta}^\ast}(\mathbf{x}),\mathbf{e}_j\right)=C,\ \forall \mathbf{x} \in \mathcal{X},\ \forall j \neq y_\mathbf{x}.
		\label{eq:totoal_C}
	\end{equation}
	Furthermore, by the boundedness condition in the theorem,
	\begin{equation}
		0 \leq \ell(f_{\boldsymbol{\theta}^\ast}(\mathbf{x}), \mathbf{e}_j) \leq {C}/\left( {K-1}\right) ,\ 	\forall \mathbf{x} \in \mathcal{X},\ \forall j \neq y_\mathbf{x}.
	\end{equation}
	In \eqref{eq:totoal_C}, the sum of the $K-1$ nonnegative terms is $C$, and each term is upper bounded by $C/(K-1)$. Hence, we must have
	\begin{equation} 
		\ell(f_{\boldsymbol{\theta}^\ast}(\mathbf{x}), \mathbf{e}_j) = \frac{C}{K-1}, \quad \forall \mathbf{x} \in \mathcal{X},\ \forall j \neq y_\mathbf{x}.
	\end{equation}
	Substituting this result into \eqref{eq:class_conditional_rewrite} gives
	\begin{equation}
		\hspace{-0.1cm}
		\hat{R}_\ell(f_{\boldsymbol{\theta}^\ast}) = C\mathbb{E}_{\mathbf{x}, y_{\mathbf{x}}}\!\left[ 1 - \varphi_{y_{\mathbf{x}}}\right]  - \mathbb{E}_{\mathbf{x}, y_{\mathbf{x}}}\!\!\left[\sum_{j \neq y_{\mathbf{x}}}\!  \xi_{y_{\mathbf{x}}, j} \frac{C}{K\!-1}\right]\!.
	\end{equation}
	Therefore, for any $f_{\boldsymbol{\theta}}$,
	\begin{equation}
		\hspace{-0.2cm}
		\hat{R}_\ell(f_{\boldsymbol{\theta}^\ast}\!) \!-\!\! \hat{R}_\ell(f_{\boldsymbol{\theta}}) \!\!=\! 	\mathbb{E}_{\mathbf{x}, y_{\mathbf{x}}}\!\!\!\left[\!  \sum_{j \neq y_{\mathbf{x}}} \!\!\xi_{y_{\mathbf{x}}, j} \!\left( \!\ell\!\left(f_{\boldsymbol{\theta}}\!\left( \mathbf{x}\right)\!, \!\mathbf{e}_j\right) \!-\! \frac{C}{K \!\!-\!1}\right) \!\right]\!\!.\!\!
	\end{equation}
	Since $\xi{y_{\mathbf{x}},j}>0$ and
	\begin{equation}
		\ell(f_{\boldsymbol{\theta}}\left( \mathbf{x}\right), \mathbf{e}_j) \leq C/(K-1), \quad \forall \mathbf{x} \in \mathcal{X},\ \forall j \neq y_\mathbf{x}.
	\end{equation}
	we obtain
	\begin{equation}
		\hat{R}_\ell(f_{\boldsymbol{\theta}}^\ast) - \hat{R}_\ell(f_{\boldsymbol{\theta}}) \leq 0,\quad \forall f_{\boldsymbol{\theta}}.
	\end{equation}
	Thus, $f_{\boldsymbol{\theta}^\ast}$ is also an optimal solution of the noisy-label risk $\hat{R}_\ell(f_{\boldsymbol{\theta}})$. Consequently, the symmetric loss function $\ell$ is robust to class-dependent incorrect labels under the stated conditions.
	
	Similarly, following the analysis of the boundary case in the proof of Theorem~1, the boundary condition $\varphi_{y_\mathbf{x},j} = 1 - \varphi_{y_\mathbf{x}}$ cannot, in general, guarantee robustness of $\ell$ to incorrect labels.
	
	This completes the proof. $\hfill\blacksquare$
	
	\section{Proof of logarithmic distribution}
	
	Let $U\sim \mathrm{Unif}(0,1)$ be a random variable, and define
	\begin{equation}
		G=2^U-1.
	\end{equation}
	Then, the random variable $G$ follows a continuous logarithmic distribution on the interval $\left[ 0,1\right]$.
	
	\textbf{Proof:}
	Since the transformation $u\mapsto 2^u-1$ is strictly increasing on $\left[ 0,1\right]$, the support of $G$ is $\left[ 0,1\right]$. The inverse transformation is
	\begin{equation}
		u=\log_2(1+g),\qquad 0\le g\le 1,
	\end{equation}
	with derivative
	\begin{equation}
		\frac{du}{dg}=\frac{1}{(1+g)\ln 2}.
	\end{equation}
	By the monotone transformation formula for random variables, the probability density function of $G$ is
	\begin{equation}
		f_G(g)
		\!=\! f_U\bigl(\log_2(1\!+\!g)\bigr)\!\left|\frac{du}{dg}\right|\!=\! 1\!\cdot\! \frac{1}{(1\!+\!g)\ln 2},
		\  0\!<\!g<\!1\!.
	\end{equation}
	Therefore, $G$ follows a continuous logarithmic distribution on $\left[ 0,1\right]$.
	
	This completes the proof. $\hfill\blacksquare$
	
}

\end{document}